\newtheorem{remark}{Remark}
\newtheorem{proposition}{Proposition}
\newtheorem{theorem}{Theorem}
\newtheorem{itassumption}{Assumption}
\newtheorem{itassumptionr}{Assumption (rev.)}
\newtheorem{itresult}{Result}
\DeclareMathOperator{\diag}{diag}
\DeclareMathOperator{\rank}{rank}
\DeclareMathOperator{\sat}{sat}
\DeclareMathOperator{\sgn}{sgn}
\begin{document}
%
\title{{\color{black}Resilience in Platoons of Cooperative Heterogeneous Vehicles: Self-organization Strategies and Provably-{\color{black}correct} Design}}


\author{
Di Liu, Sebastian Mair, Kang Yang, Simone Baldi, Paolo Frasca, and Matthias Althoff  \vspace{-0.1cm}
\thanks{This research was partly supported by the European Union Horizon 2020 research and innovation programme under
the Marie Sklodowska-Curie grant 899987, by the Natural
Science Foundation of China grant 62073074, and by the ANR (French National Science Foundation) no. ANR-18-
CE40-0010 (HANDY), and by the European Commission project justITSELF grant 817629.  (corresponding author: S. Baldi) \newline
D. Liu, S. Mair and M. Althoff are with School of Computation, Information and Technology, Technical University of Munich (TUM), Germany. D. Liu is also with Visual Intelligence for Transportation lab (VITA), Ecole Polytechnique Federale de Lausanne (EPFL), Switzerland (emails: \{di.liu, sebi.mair, althoff\}@tum.de) \newline 
K. Yang and S. Baldi are with Frontiers Science Center for Mobile Information Communication and Security, Southeast University, Nanjing, China. (emails: 220194629@seu.edu.cn, s.baldi@tudelft.nl)\newline 
P. Frasca is with University of Grenoble Alpes, CNRS, Inria, Grenoble INP, GIPSA-Lab, Grenoble F-38000, France (e-mail: paolo.frasca@gipsa-lab.fr)
}}

%



\IEEEtitleabstractindextext{%
\begin{abstract}
This work proposes {\color{black}provably-correct self-organizing strategies for platoons of heterogeneous vehicles}. Self-organization is the capability to autonomously homogenize to a common group behavior. {\color{black}We show that self-organization keeps resilience to \emph{acceleration limits} and \emph{communication failures}, i.e., homogenizing to a common group behavior makes the platoon recover from these  impairments.} Adhering to {\color{black}acceleration} limits is achieved by self-organizing to a common constrained group behavior that prevents reaching acceleration limits. In the presence of communication failures, resilience is achieved by self-organizing to a common group observer to estimate the missing information. Stability {\color{black}and string stability} of the self-organization mechanism are studied analytically, {\color{black}and correctness with respect to {\color{black}traffic actions (e.g. emergency braking, cut-in, merging)} is realized through a provably-correct safety layer}. Numerical validations via the {\color{black}platooning toolbox OpenCDA in CARLA and via the CommonRoad platform} {\color{black} confirm improved performance through self-organization and the provably-correct safety layer}. 
\vspace{-0.3cm} \newline 
\end{abstract}
\begin{IEEEkeywords}
Platooning, cooperative adaptive cruise control, self-organization, group dynamics, {\color{black}provably-correct design}. 
\end{IEEEkeywords}
}

\maketitle

\IEEEdisplaynontitleabstractindextext

%
\IEEEpeerreviewmaketitle

\section{Introduction}

Platooning refers to automated vehicles following each other at short distances by automatically accelerating and braking \cite{6338839}: pioneering studies date back to the 70's \cite{1100652}. Recognized advantages are better road exploitation and energy efficiency due to reduced air drag \cite{8671766,7437386}. After initially using only on-board radar sensing as in adaptive cruise control (ACC) technology, it was shown that the combination of on-board sensing and inter-vehicle wireless communication could achieve shorter inter-vehicle distances, leading to cooperative adaptive cruise control (CACC) \cite{6338839,6669336}. Inter-vehicle communication in CACC may involve the vehicle predecessor, the leading vehicle \cite{8049413}, or k-nearest neighboring vehicles \cite{7872392,9462542}. 

Resilience of CACC-equipped platoons, i.e., their capacity to recover from various impairments, has been the subject of several research efforts. \emph{Acceleration limits} might impair the formation-keeping task \cite{751767,1383798}, and approaches to improve resilience make use of cooperation mechanisms \cite{7972982} or adaptation mechanisms \cite{TAO20197,9007498}. \emph{Communication failures} also cause impairments in CACC: studies have shown that communication issues \cite{9037093,doi:10.1002/acs.3032,8360769} deeply affect CACC performance \cite{6907977}. Approaches to improve resilience include robust control \cite{9081994,8276639,9497782}, delay compensation \cite{8933478}, resource-efficient communication \cite{8606268,9099998}, dynamic topology adjustments \cite{7954618}, and observer design \cite{8550585}. While several approaches to improve resilience of CACC have been proposed, we are not aware of provably-correct CACC designs with respect to given specifications. {\color{black}Previous methods, e.g. fail-safe planners \cite{9091937}, positive invariant sets \cite{7995821,7944649} or control barrier functions \cite{8114339}, were only applied to ACC.} Specifications arising in lane changing \cite{7515222},  interaction with human drivers \cite{9072289}, or collision avoidance \cite{9001172,9210204}, are still not ensured in CACC. 

This work proposes \emph{self-organization} in CACC-equipped platoons of cooperative vehicles, to address causes of impairments like acceleration limits and communication failures. Our work covers the analytic study of stability for the self-organization mechanism and its provably-correct design. The term \emph{self-organization} refers to the capability of the platoon to homogenize to a-priori unknown common group dynamics, despite the vehicles in the platoon being heterogeneous. The reason for seeking a homogeneous behavior is that making all vehicles respond in an analogous way to impairments promotes resilience of the platoon. Seeds of this idea can be found in distributed model reference control \cite{8704911,8372462}, where heterogeneous systems are homogenized to the same reference dynamics. Notably, the common group behavior in our approach is not imposed a priori, but generated autonomously by vehicles composing the platoon. Our contributions are:
\begin{itemize}
\item[a)] We study analytically the stability {\color{black}and string stability} of the self-organization mechanism. While standard CACC \cite{7972982,9007498,6907977} does not achieve convergence of the spacing errors to zero when the platoon is composed of heterogeneous vehicles, self-organization realizes asymptotic convergence {\color{black}and string stable behavior}, despite the heterogeneity of the vehicles composing the platoon. 
\item[b)] We show that self-organization realizes resilience to acceleration limits. Here, the self-organization mechanism makes all vehicles converge autonomously to the worst-case acceleration limits in the platoon (constrained group model). 
Generating such group model prevents loss of cohesiveness and collisions, since the leading vehicle can limit its acceleration/deceleration to prevent all follower vehicles from hitting their acceleration limits. 
\item[c)] We show that self-organization realizes resilience to wireless communication failures. Here, the self-organization mechanism makes it possible to design an observer with common dynamics (group observer). With this observer, each vehicle reconstructs the missing signal to recover a performance close to the ideal communication scenario.
\item[d)] We combine CACC self-organization with a provably-correct safety layer for the first time. {\color{black}Our implementation 
in CommonRoad integrates self-organization in a provably-correct safety layer to handle non-nominal operation.} 
\end{itemize}

The rest of the paper is organized as follows: our solution concept is presented in Sec.~\ref{system_struc}. Self-organization is achieved and analyzed in Sec.~\ref{Chap_group}. Sec.~\ref{Chap_design} proposes resilient designs 
based on the self-organization idea. Effectiveness of the designs is validated in Sec.~\ref{Chap_simu}. Conclusions are drawn in Sec.~\ref{Chap_conclusion}.

\begin{figure}[t]
	\centering
	\includegraphics[
	width=0.98\linewidth]{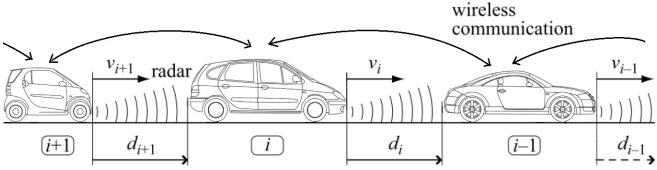}
	\caption{CACC-equipped platoon: on-board sensing of spacing error and relative velocity are supplemented by the wireless communication of additional variables (e.g. acceleration). }
	\label{platoon_fig}
\end{figure}

\section{CACC system structure and solution concept}\label{system_struc}
Consider a platoon composed of $M$ vehicles (Fig.~\ref{platoon_fig}). Let $S_M = \{ 2, 3, \ldots, M \}$ be the set of following vehicles, with $i=1$ reserved for the leading vehicle. {\color{black}In the CACC literature  
\cite{7972982,9007498,6907977}, a widely used model for the $i$-th vehicle is} 
\begin{equation} \label{Math_platoon}
    \begin{bmatrix}
    \dot{q_i}(t)\vspace{0.15cm}\\
    \dot{v_i}(t)\vspace{0.15cm}\\
    \dot{a_i}(t)
    \end{bmatrix} = \begin{bmatrix}
		v_i(t)\vspace{0.05cm}\\
    a_i(t)\vspace{-0.25cm}\\
    -\frac{1}{\tau_i}a_i(t) +\frac{1}{\tau_i}(\overbrace{u_{bl_i}(t)+ u_{hm_i}(t)}^{u_{i}(t)})
    \end{bmatrix},
\end{equation}
where $q_i$ is the position of vehicle $i$, $v_i$ is its velocity, and $a_i$ is its acceleration. The control input $u_{i}$ to be designed is a desired acceleration, which differs from the acceleration $a_i$ due to the engine time constant $\tau_i>0$, representing the time required for the engine to reach the desired acceleration. Because each vehicle has its own engine time constant $\tau_i$, we propose the input $u_{i} = u_{bl_i}+ u_{hm_i}$, where  $u_{bl_i}$ is a \emph{baseline} input designed according to the standard CACC protocol (cf. Sec.~\ref{system_struc}-A), and $u_{hm_i}$ is an auxiliary input designed to \emph{homogenize} the vehicle behavior (cf. Sec.~\ref{system_struc}-B). The aim of homogenization is to promote resilience by letting all vehicles respond with the same dynamics to impairments.

\subsection{Baseline input}

The goal of each vehicle in the platoon, except the leading vehicle, is to maintain a desired distance with its precedessor. To this purpose, a constant time headway spacing policy is adopted \cite{7972982,9007498,6907977}, resulting in the spacing error 
\begin{align}\label{error}
e_i(t) &= q_{i-1}(t) - q_i(t) - h v_i (t), \nonumber \\
\dot{e}_i(t) &= v_{i-1}(t) - v_i(t) - h a_i (t),\: \: i \in S_M, 
\end{align}
where $h>0$ is the time headway constant and $h v_i$ is the velocity-dependent desired distance.  

The control objective is to regulate $e_i$ to zero. 
After defining the relative velocity $\nu_i = v_{i-1} - v_i$, the following dynamics are obtained from (\ref{Math_platoon}) and (\ref{error}): 
\begin{align} \label{LTI_platoon}
    \begin{bmatrix}
    \dot{e}_i(t)\\
    \dot{\nu}_i(t)\\
    \dot{a}_i(t)
    \end{bmatrix} &= \begin{bmatrix}
    0 & 1 & -h\\
    0 & 0 & -1\\
    0 & 0 & -\frac{1}{\tau_i}
    \end{bmatrix}
    \begin{bmatrix}
    e_i(t)\\ \nu_i(t)\\a_i(t)
    \end{bmatrix}  \\ 
		& \quad + \begin{bmatrix}
    0\\
    1\\
    0
    \end{bmatrix}
    a_{i-1}(t)
     + \begin{bmatrix}
    0\\0\\\frac{1}{\tau_i}
    \end{bmatrix}(u_{bl_i}(t) + u_{hm_i}(t)). \nonumber
\end{align}

{\color{black}The control input to regulate the spacing error in standard CACC literature \cite{7972982,9007498,6907977} uses  proportional-derivative action from $e_i$ and $\dot{e}_i$. A similar control is adopted here as a baseline input $u_{bl_i}$}: 
\begin{equation}
\label{cont}
\begin{split}
     h\dot{u}_{bl_i}(t) &= -u_{bl_i}(t) + \xi_{i}(t),  \\
    \xi_{i}(t) &= \begin{cases} K_{p_i} e_i(t) + K_{d_i} \dot{e}_i(t) & \\
		  \quad + \xi_{hm_{i}}(t) + u_{bl_{i-1}}(t), \; & i \in S_M \\
    u_r(t) \; \;\;\;\;\;& i=1
    \end{cases}
\end{split}
\end{equation}
where $K_{p_i}$ and $K_{d_i}$ are the proportional and derivative gains, $u_{bl_{i-1}}$  is the cooperative signal received with wireless communication between vehicles $i$ and $i-1$, and $u_r$ is the desired acceleration of the leading vehicle. Compared to the standard control input in CACC, a new auxiliary term $\xi_{hm_i}$ is introduced in (\ref{cont}), whose aim is to homogenize the vehicle behavior. 
The homogenizing design for $u_{hm_i}$ and $\xi_{hm_i}$ is presented hereafter.

\subsection{Auxiliary homogenizing input}
To explain the homogenization idea, assume that, in place of the heterogeneous dynamics \eqref{LTI_platoon}-\eqref{cont}, each vehicle exhibits the following dynamics (referred to as \emph{homogeneous dynamics})
\begin{align} \label{LTI_platoon_hom}
    \begin{bmatrix}
    \dot{e}_i(t)\\
    \dot{\nu}_i(t)\\
    \dot{a}_i(t)
    \end{bmatrix} &= \begin{bmatrix}
    0 & 1 & -h\\
    0 & 0 & -1\\
    0 & 0 & -\frac{1}{\tau_0}
    \end{bmatrix}
    \begin{bmatrix}
    e_i(t)\\ \nu_i(t)\\a_i(t)
    \end{bmatrix} \nonumber \\
		& \qquad + \begin{bmatrix}
    0\\
    1\\
    0
    \end{bmatrix}
    a_{i-1}(t)
     + \begin{bmatrix}
    0\\0\\\frac{1}{\tau_0}
    \end{bmatrix}u_{bl_i}(t),
\end{align}
\begin{equation}
\label{cont_hom}
     h\dot{u}_{bl_i}(t) = -u_{bl_i}(t) + K_{p_0} e_i(t) + K_{d_0} \dot{e}_i(t) + u_{bl_{i-1}}(t).
\end{equation}
The difference compared to \eqref{LTI_platoon}-\eqref{cont} is that $\tau_0$, $K_{p_0}$ and $K_{d_0}$ are homogeneous for all vehicles. Recall that homogeneous vehicle dynamics make it easier to achieve synchronized platooning behavior \cite{7972982,9007498,6907977}. A natural question arises: \emph{how to converge to the homogeneous dynamics} \eqref{LTI_platoon_hom}-\eqref{cont_hom} starting from the heterogeneous dynamics \eqref{LTI_platoon}-\eqref{cont}? We answer this question by establishing the following result:

\begin{proposition}
The auxiliary inputs $u_{hm_i}$ in \eqref{LTI_platoon} and $\xi_{hm_i}$ in \eqref{cont} {\color{black}homogenizing} any vehicle $i$ to dynamics \eqref{LTI_platoon_hom} and \eqref{cont_hom} are
\begin{align}
\label{gain_hom}
u_{hm_i}(t) &= \frac{\tau_0-\tau_i}{\tau_0}(a_i(t)-u_{bl_i}(t)), \nonumber \\
\xi_{hm_i}(t) &= (K_{p_0}-K_{p_i}) e_i(t) +(K_{d_0}-K_{d_i}) \dot{e}_i(t). 
\end{align}
\end{proposition}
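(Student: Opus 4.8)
The plan is to prove the proposition by direct substitution and algebraic matching: I would insert the claimed expressions for $u_{hm_i}$ and $\xi_{hm_i}$ from \eqref{gain_hom} into the heterogeneous model \eqref{LTI_platoon}--\eqref{cont} and show, term by term, that the result coincides with the homogeneous model \eqref{LTI_platoon_hom}--\eqref{cont_hom}. Since the first two rows of \eqref{LTI_platoon} and \eqref{LTI_platoon_hom} (the $\dot e_i$ and $\dot\nu_i$ equations) are identical and carry no dependence on $\tau_i$, $K_{p_i}$, or $K_{d_i}$, attention can be confined to the acceleration equation and to the baseline-input dynamics, which are the only rows affected by the vehicle-specific parameters.

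First I would treat the acceleration row. Substituting $u_i = u_{bl_i} + u_{hm_i}$ with $u_{hm_i} = \tfrac{\tau_0-\tau_i}{\tau_0}(a_i - u_{bl_i})$ into $\dot a_i = -\tfrac{1}{\tau_i}a_i + \tfrac{1}{\tau_i}u_i$, I would collect the coefficients of $a_i$ and of $u_{bl_i}$ separately. The key algebraic identity is that the $a_i$-coefficient $-\tfrac{1}{\tau_i} + \tfrac{\tau_0-\tau_i}{\tau_i\tau_0}$ collapses to $-\tfrac{1}{\tau_0}$, and symmetrically the $u_{bl_i}$-coefficient $\tfrac{1}{\tau_i} - \tfrac{\tau_0-\tau_i}{\tau_i\tau_0}$ collapses to $\tfrac{1}{\tau_0}$; both follow at once by placing the terms over the common denominator $\tau_i\tau_0$. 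This reproduces exactly the acceleration row of \eqref{LTI_platoon_hom}.

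Second I would treat the baseline-input dynamics. Inserting $\xi_{hm_i} = (K_{p_0}-K_{p_i})e_i + (K_{d_0}-K_{d_i})\dot e_i$ into the definition of $\xi_i$ in \eqref{cont}, the heterogeneous gains telescope: $K_{p_i}e_i + (K_{p_0}-K_{p_i})e_i = K_{p_0}e_i$, and likewise for the derivative term, so that $\xi_i = K_{p_0}e_i + K_{d_0}\dot e_i + u_{bl_{i-1}}$. Substituting this into $h\dot u_{bl_i} = -u_{bl_i} + \xi_i$ then yields precisely \eqref{cont_hom}.

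I do not expect a genuine obstacle here: the statement is an exact-matching result, so the argument is purely computational, the two cancellations above being essentially its whole content. The only conceptual remark worth making is that these expressions need not merely be verified but can be \emph{derived}, by equating the right-hand sides of \eqref{LTI_platoon} with \eqref{LTI_platoon_hom} and of \eqref{cont} with \eqref{cont_hom} and solving the resulting linear relations for $u_{hm_i}$ and $\xi_{hm_i}$; since $a_i$, $u_{bl_i}$, $e_i$, $\dot e_i$ are available signals and $\tau_0 > 0$, this also shows that the homogenizing inputs are uniquely determined.
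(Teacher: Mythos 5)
Your proposal is correct and follows the same route as the paper, which simply states that the result "can be verified by direct substitution of \eqref{gain_hom} into \eqref{LTI_platoon}--\eqref{cont}"; you merely spell out the two coefficient cancellations that the paper leaves implicit. Your closing remark that the inputs can also be \emph{derived} (and are unique, since $\tau_0>0$) is a nice addition but not part of the paper's argument.
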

\begin{proof}
This can be verified by direct substitution of \eqref{gain_hom} into \eqref{LTI_platoon}-\eqref{cont}, which results in the dynamics \eqref{LTI_platoon_hom}-\eqref{cont_hom}. 
\end{proof}

\begin{remark}[\emph{A-priori knowledge}]
The homogenizing inputs \eqref{gain_hom} require a common set of parameters $(\tau_0, K_{p_0}, K_{d_0})$. The main idea of this work, presented in Sec.~III, is to avoid imposing common parameters a priori, but generate a \emph{homogeneous group behavior} autonomously.\vspace{-0.2cm}
\end{remark}

\begin{figure*}[t]
      \centering
\begin{subfigure}{0.48\textwidth}
      \centering
      \vspace{-0.3cm}\includegraphics[width=1.05\textwidth]{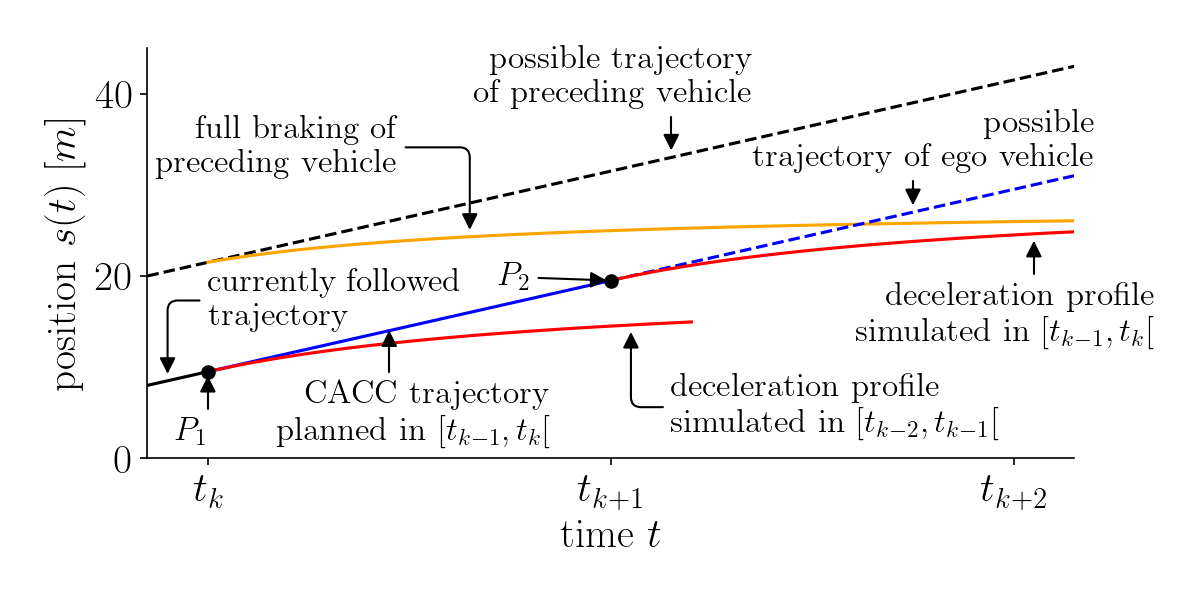}
  \caption{Interaction between the nominal and the safety controllers.}
  \label{cacc-flow-chart-fail}
\end{subfigure}
    ~ 
\begin{subfigure}{0.48\textwidth}
  \centering
      \vspace{-0.3cm}\includegraphics[width=0.835\textwidth]{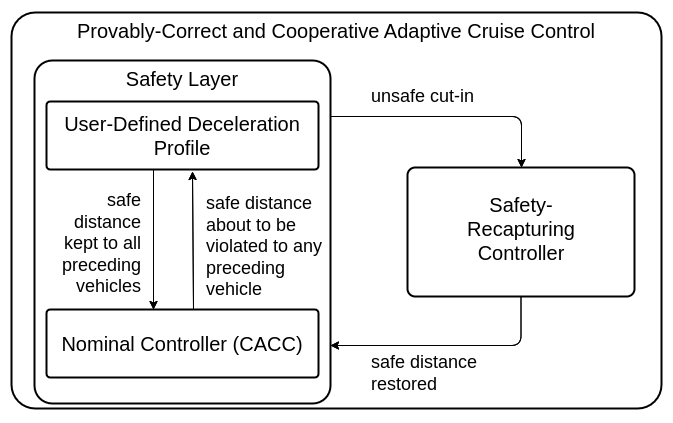}
  \caption{{\color{black}Statechart} of the provably-correct control layer.}
  \label{cacc-flow-chart}
\end{subfigure}
\caption{Concept of the provably-correct architecture.}
\label{TOT-cacc-flow-chart}
\end{figure*}

\subsection{Provably-correct safety layer} \label{Chap_safety}

To ensure safety of the CACC, we embed it into a provably-correct safety layer in line with \cite{9091937}. The goal is to keep a certain distance to the preceding vehicle such that safe emergency braking is always possible, even if a preceding vehicle fully brakes. The concept of the provably-correct layer is illustrated in Fig.~\ref{cacc-flow-chart-fail}, which we explain subsequently. We assume that the CACC protocol computes inputs $u_i$ for discrete, equidistant points in time $t_k=k dt$, where $k \in \mathbb{N}$ is the time step and $dt$ is the sampling time. In the time interval $[t_{k-1}, t_k[$, the CACC computes an input to be applied from time $t_k$ onward. This input is simulated for the ego vehicle until time $t_{k+1}$, followed by a user-defined emergency deceleration profile until standstill. At the same time, a forward simulation of the preceding vehicle is executed, assuming it applies the largest possible deceleration. If no collision between the ego vehicle and the preceding vehicle occurs, the input computed by the CACC is applied during the interval $[t_k,t_{k+1}[$. Otherwise, the emergency deceleration profile starts to be applied from $t_k$ onward, the safety of which was verified during the interval $[t_{k-2}, t_{k-1}[$.

The architecture also includes a safety-recapturing controller, designed for the case of cut-ins by other vehicles violating the safe distance between the ego vehicle and the cut-in vehicle: the recapturing controller ensures that the safe distance is re-established within a specified time period, assuming that the cut-in vehicle does not exceed a certain deceleration. We would like to refer the reader to the concept in Fig.~\ref{cacc-flow-chart} or to \cite{9091937} for details, {\color{black}including details on how dynamical properties of monotone systems 
make it possible to dramatically alleviate computational complexity}. The safety layer is implemented in the CommonRoad platform \cite{7995802}, as clarified in Sec.~V.

\subsection{Problem formulation}
The goal of this work is to achieve resilience via self-organization. Let us formulate the objectives as follows:

\underline{\emph{Self-organization Problem}} (group behavior): Consider the dynamics \eqref{LTI_platoon}, \eqref{cont} in the presence of heterogeneous engine constants $\tau_i$ and heterogeneous proportional-derivative gains $K_{p_i}$, $K_{d_i}$. Design the auxiliary inputs $u_{hm_i}$ and $\xi_{hm_i}$ in \eqref{LTI_platoon}, \eqref{cont} where, differently from Proposition 1, homogenization to common dynamics $\tau_0$, $K_{p_0}$, $K_{d_0}$ is achieved {\color{black}autonomously} without such dynamics being known a priori. 

\underline{\emph{Resilience Problem 1}} (acceleration limits): Consider the dynamics \eqref{LTI_platoon} in the presence of possibly heterogeneous acceleration limits $a_i(t) \in [a_{min_i},a_{max_i}]$. Design a suitable group behavior where each vehicle in the platoon can still maintain the formation without hitting the acceleration limits.

\underline{\emph{Resilience Problem 2}} (communication failures): Consider the control action \eqref{cont} with missing $u_{bl_{i-1}}$ due to communication failures. Design a suitable group behavior where each vehicle in the platoon can still maintain the formation by reconstructing the missing $u_{bl_{i-1}}$.

\section{Converging to a group behavior} \label{Chap_group}

This section contains the main result about convergence of the platoon dynamics to a group behavior, and stability properties of the resulting system. Before this, a few preliminaries on consensus dynamics are introduced.  

\subsection{Preliminaries on consensus dynamics}
To explain how to converge to a common group model, let us define the  consensus dynamics
\begin{align}
\dot{\bar{K}}_{p \tau_i}(t) &= \mu_p \sum_{j \in \mathcal{N}_i} 
(\bar{K}_{p \tau_j}(t) - \bar{K}_{p \tau_i}(t)), 
\nonumber 
\end{align}
\begin{align}
\label{cons_hom}
\dot{\bar{K}}_{d_i}(t) &= \mu_d \sum_{j \in \mathcal{N}_i} 
(\bar{K}_{d_j}(t) - \bar{K}_{d_i}(t)), 
  \nonumber \\
\dot{\bar{\tau}}_{i}(t) &= \mu_{\tau} \sum_{j \in \mathcal{N}_i} 
(\bar{\tau}_{j}(t) - \bar{\tau}_{i}(t)),  
\end{align}
with initial conditions 
\begin{align}\label{prod_K}
\bar{K}_{p \tau_i}(0) = K_{p \tau_i}:=K_{p_i} \tau_i, \  \bar{K}_{d_i}(0) = K_{d_i}, \ \bar{\tau}_{i}(0) = \tau_{i}, 
\end{align}
where $\mu_p$, $\mu_d$, $\mu_{\tau}>0$ are consensus gains, and $\mathcal{N}_i$ indicates {\color{black}
the set of vehicles that vehicle $i$ can communicate with}. The variables $\bar{K}_{p \tau_i}$, $\bar{K}_{d_i}$, $\bar{\tau}_{i}$ are \emph{consensus variables}, whose aim is to converge to common values for all vehicles. It is well known that convergence depends on the graph connectivity \cite{frasca_book}, in line with the following result.  

\begin{proposition}[Average consensus \cite{frasca_book}]
For an undirected connected communication graph, the consensus dynamics \eqref{cons_hom} leads to
\begin{align}
\label{cons_hom_b}
\lim_{t \rightarrow \infty} \bar{K}_{p \tau_i}(t) &= \frac{1}{M}\sum_{j=1}^M K_{p \tau_j}=:K_{{p \tau}_0},  \nonumber \\
\lim_{t \rightarrow \infty} \bar{K}_{d_i}(t) &= \frac{1}{M}\sum_{j=1}^M K_{d_j}=:K_{d_0}, \nonumber \\
\lim_{t \rightarrow \infty} \bar{\tau}_{i}(t) &= \frac{1}{M}\sum_{j=1}^M \tau_j=:\tau_{0},
\end{align}
exponentially.   
\end{proposition}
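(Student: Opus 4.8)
The plan is to treat the three subsystems in \eqref{cons_hom} separately, since they are mutually decoupled and share an identical structure; it then suffices to analyze a generic consensus variable stacked into the vector $\vect{x}(t) = [x_1(t),\ldots,x_M(t)]^\top$ evolving with a generic gain $\mu$. The first step is to rewrite the scalar update $\dot{x}_i = \mu\sum_{j\in\mathcal{N}_i}(x_j-x_i)$ in compact matrix form. Using the identity $\sum_{j\in\mathcal{N}_i}(x_j-x_i) = -(L\vect{x})_i$, where $L=D-A$ is the Laplacian of the undirected communication graph ($D$ the degree matrix, $A$ the adjacency matrix), the dynamics become the linear time-invariant system $\dot{\vect{x}}(t)=-\mu L\vect{x}(t)$, whose closed-form solution is $\vect{x}(t)=e^{-\mu L t}\vect{x}(0)$.

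Next I would exploit the structural properties of $L$ for an undirected connected graph. Because the graph is undirected, $L$ is real symmetric, hence orthogonally diagonalizable with real eigenvalues; it is moreover positive semidefinite, as its quadratic form satisfies $\vect{x}^\top L\vect{x} = \tfrac{1}{2}\sum_{(i,j)\in\mathcal{E}}(x_i-x_j)^2 \ge 0$, where $\mathcal{E}$ denotes the edge set. The all-ones vector $\mathbf{1}$ is always a zero eigenvector, since each row of $L$ sums to zero and thus $L\mathbf{1}=0$. Connectivity is the decisive ingredient: the quadratic form vanishes only if $x_i=x_j$ across every edge, which for a connected graph forces $\vect{x}$ to be constant, so that $\ker L = \mathrm{span}\{\mathbf{1}\}$ and the zero eigenvalue is simple. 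I can then order the eigenvalues as $0=\lambda_1<\lambda_2\le\cdots\le\lambda_M$ with an associated orthonormal eigenbasis $\vect{v}_1,\ldots,\vect{v}_M$, where $\vect{v}_1=\mathbf{1}/\sqrt{M}$.

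With this spectral picture in place, the final step is to evaluate the limit. Expanding $\vect{x}(0)=\sum_{k=1}^{M} c_k\vect{v}_k$ with $c_k=\vect{v}_k^\top\vect{x}(0)$, the solution reads $\vect{x}(t)=\sum_{k=1}^{M} c_k e^{-\mu\lambda_k t}\vect{v}_k = c_1\vect{v}_1 + \sum_{k\ge 2} c_k e^{-\mu\lambda_k t}\vect{v}_k$. The first mode is invariant while every remaining mode decays at least as fast as $e^{-\mu\lambda_2 t}$, so $\vect{x}(t)\to c_1\vect{v}_1 = \big(\mathbf{1}^\top\vect{x}(0)/M\big)\mathbf{1} = \tfrac{1}{M}\big(\sum_{j} x_j(0)\big)\mathbf{1}$ exponentially, with rate governed by the algebraic connectivity $\mu\lambda_2$. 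An equivalent shortcut is to first note the conservation law $\tfrac{d}{dt}(\mathbf{1}^\top\vect{x})=-\mu\,\mathbf{1}^\top L\vect{x}=0$, which freezes the average at its initial value, and then show that the disagreement vector $\vect{x}-(\mathbf{1}^\top\vect{x}/M)\mathbf{1}$ contracts at rate $\mu\lambda_2$. Substituting the initial conditions \eqref{prod_K} and the respective gains $\mu_p,\mu_d,\mu_\tau$ into this common template produces the three limits in \eqref{cons_hom_b}.

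The main obstacle---though a mild one for this classical result---is the connectivity argument establishing that the zero eigenvalue of $L$ is simple; everything else is routine linear algebra together with standard properties of the matrix exponential. It is worth stressing that this is precisely where the undirected and connected hypotheses are used: a disconnected graph would enlarge $\ker L$ (one zero eigenvalue per connected component), and a directed graph could destroy the symmetry of $L$, in either case invalidating the clean characterization of the limit as the exact initial average.
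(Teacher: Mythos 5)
Your argument is correct and complete: the reduction to $\dot{\vect{x}}=-\mu L\vect{x}$, the simplicity of the zero eigenvalue of the Laplacian under connectivity, and the spectral expansion giving exponential convergence to the initial average at rate $\mu\lambda_2$ constitute the standard proof of this classical result. The paper itself offers no proof and simply invokes the cited reference, so your derivation is exactly the argument being appealed to.
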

The communication graph in Proposition 2 should not be confused with the inter-vehicle look-ahead interaction graph due to on-board sensing. For instance, an undirected communication graph arises from  bidirectional wireless communication often considered in CACC \cite{9007498,8933478,7525305} where $\mathcal{N}_i$ comprises the predecessor and the follower of each vehicle.

\subsection{Self-organizing design and stability analysis}

The parameters in \eqref{cons_hom_b} have been denoted as $\tau_0$, $K_{p_0}$, $K_{d_0}$ like the common set of parameters in \eqref{LTI_platoon_hom}-\eqref{cont_hom}. Note, however, that the parameters in \eqref{cons_hom_b} are a-priori unknown to all vehicles in the platoon. In line with \cite{8704911}, such parameters define an autonomously-generated common group behavior. To let all vehicles in the platoon homogenize to such common behavior, the following design for the auxiliary inputs is proposed:
\begin{align}
\label{gain_hom2}
u_{hm_i}(t) &= \frac{\bar{\tau}_{i}(t)-\tau_i}{\bar{\tau}_{i}(t)}(a_i(t)-u_{bl_i}(t)), \nonumber \\
\xi_{hm_i}(t) &= (\bar{K}_{p_i}(t)-K_{p_i}) e_i(t) +(\bar{K}_{d_i}(t)-K_{d_i}) \dot{e}_i(t) 
\end{align}
where $\bar{K}_{p_i}(t) = \bar{K}_{p \tau_i}(t)/\bar{\tau}_{i}(t)$. The main difference between \eqref{gain_hom} and \eqref{gain_hom2} is replacing the constants $K_{p_0}$, $K_{d_0}$, $\tau_0$ with the consensus variables $\bar{K}_{p_i}(t)$, $\bar{K}_{d_i}(t)$, $\bar{\tau}_{i}(t)$. A non-trivial question is to analyze the stability properties resulting from \eqref{gain_hom2}: in fact, the gains in \eqref{gain_hom2} are time-varying due to \eqref{cons_hom} and result in a nonlinear closed-loop system. 

The following  result shows the stability properties of the self-organizing strategy \eqref{cons_hom}, \eqref{gain_hom2}, i.e. the capability of achieving asymptotic formation thanks to the common group behavior. {\color{black}Later, string stability is elaborated.}

\begin{theorem}
Consider the platoon composed of vehicle error dynamics \eqref{LTI_platoon}  and input dynamics  \eqref{cont}. Select the auxiliary inputs as in \eqref{gain_hom2} with parameters $\bar{\tau}_{i}$, $\bar{K}_{p_i}$, $\bar{K}_{d_i}$ arising from the consensus dynamics \eqref{cons_hom}. Then, the resulting spacing error $e_i$, $i \in S_M$ converge to zero as time goes to infinity. 
\end{theorem}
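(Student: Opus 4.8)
The plan is to show that the self-organizing inputs \eqref{gain_hom2} turn the heterogeneous closed loop into a \emph{homogeneous} closed loop whose parameters are the time-varying consensus variables, and then to treat the deviation of these variables from their limits as a vanishing perturbation of a stable limiting system. First I would substitute \eqref{gain_hom2} into \eqref{LTI_platoon} and \eqref{cont} and simplify. A direct computation, mirroring Proposition~1, shows that the homogenizing terms cancel the heterogeneous parameters: the acceleration dynamics collapse to $\dot a_i = -\bar\tau_i^{-1}(t)a_i + \bar\tau_i^{-1}(t)u_{bl_i}$ and the baseline input dynamics to $h\dot u_{bl_i} = -u_{bl_i} + \bar K_{p_i}(t)e_i + \bar K_{d_i}(t)\dot e_i + u_{bl_{i-1}}$. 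This is exactly the homogeneous form \eqref{LTI_platoon_hom}--\eqref{cont_hom}, except that the constants $\tau_0,K_{p_0},K_{d_0}$ are replaced by the consensus variables $\bar\tau_i(t),\bar K_{p_i}(t),\bar K_{d_i}(t)$, making the system linear time-varying (LTV) rather than linear time-invariant (LTI).

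Next I would record two facts that make the perturbation argument work. For well-posedness: since \eqref{cons_hom} is an averaging flow, each $\bar\tau_i(t)$ stays in the convex hull of the positive initial values $\{\tau_j\}$, so $\bar\tau_i(t)\ge \min_j \tau_j>0$ for all $t$ and $\bar\tau_i^{-1}(t)$ is well-defined and bounded, with no finite-time singularity. For the convergence rate: by Proposition~2 the consensus variables converge \emph{exponentially} to the limits $\tau_0,K_{p_0},K_{d_0}$ of \eqref{cons_hom_b}, so the time-varying coefficients of the closed loop converge exponentially to those of the limiting homogeneous LTI subsystem.

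I would then stack, for each vehicle $i$, the state $z_i=[e_i,\ \nu_i,\ a_i,\ u_{bl_i}]^{\top}$ and write $\dot z_i = \mathcal A_\infty z_i + \Delta_i(t)z_i + \mathcal B_1 a_{i-1} + \mathcal B_2 u_{bl_{i-1}}$, where $\mathcal A_\infty$ is the constant matrix of the limiting homogeneous subsystem and $\|\Delta_i(t)\|\to 0$ exponentially. The gains $\tau_0,K_{p_0},K_{d_0}$ render $\mathcal A_\infty$ Hurwitz, which is the standard homogeneous-CACC condition \cite{7972982,9007498,6907977} and which I would verify through the characteristic polynomial of the subsystem. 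Since $\mathcal A_\infty$ is Hurwitz and $\|\Delta_i(t)\|$ decays exponentially, a vanishing-perturbation argument (using the Lyapunov function solving the Lyapunov equation for $\mathcal A_\infty$, bounding $\dot V$ along the trajectories, and closing with Gronwall) shows each homogenized subsystem is exponentially stable.

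Finally I would propagate stability over the platoon by a cascade induction from the leader to the tail: the leader's signals $a_1,u_{bl_1}$ are bounded and settle to the steady state dictated by $u_r$; fed as the settling inputs $a_{i-1},u_{bl_{i-1}}$ into the exponentially stable subsystem of vehicle $2$, they drive $e_2\to 0$, and the same propagates down the chain to all $i\in S_M$. The main obstacle is exactly the time-varying nature introduced by the consensus gains: one cannot simply invoke LTI stability of the limiting system, since an LTV system whose frozen-time matrices are all Hurwitz need not be stable. The crux is therefore to exploit the \emph{exponential} decay of $\Delta_i(t)$ guaranteed by Proposition~2 to make the vanishing-perturbation estimate rigorous, while keeping $\bar\tau_i(t)$ bounded away from zero throughout the transient.
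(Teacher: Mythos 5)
Your proposal is correct and follows essentially the same route as the paper: both rewrite the closed loop as a Hurwitz limiting system plus an exponentially vanishing multiplicative perturbation (the paper's $\dot{x} = Fx + \tilde{F}(t)x + \hat{F}(t)$ decomposition, where the paper invokes Lemma~1 of \cite{8704911} for the step you prove via Lyapunov/Gronwall), and both close with the same leader-to-tail cascade induction treating the preceding vehicle's signals as exponentially settling inputs. Your explicit remark that $\bar{\tau}_i(t)$ stays in the convex hull of the positive initial values, hence bounded away from zero, is a well-posedness point the paper leaves implicit.
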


\begin{proof}
The proof relies on showing that, although the closed-loop system formed by \eqref{LTI_platoon}, \eqref{cont}, \eqref{cons_hom}, and \eqref{gain_hom2} is nonlinear, it can be written in the form $\dot{x}(t) = F x(t) + \tilde{F}(t) x(t) + \hat{F}(t)$, where $F$ is a Hurwitz state matrix, and $\tilde{F}(t)$, $\hat{F}(t)$ are exponentially vanishing nonlinear terms acting as disturbances (the state $x$ will be defined later). For this type of 
dynamics, one can use the result in \cite[Lemma 1]{8704911} stating that the exponentially vanishing terms still allow exponential stability of state $x$.

First, consider vehicle 1: because vehicle 1 has no preceding vehicle, {\color{black}its dynamics do not contain any spacing error $e_1$ and relative velocity $\nu_1$. Only the dynamics of $x_1=[a_1\ u_{bl_1}]^T$ must be calculated. By using the last equations in  \eqref{LTI_platoon}, \eqref{cont}, and the first equation in \eqref{gain_hom2}, we obtain the dynamics}
\begin{align} \label{LTI_platoon_hom37}
\dot{x}_1(t) = F_1 x_1(t) + \tilde{F}_1(t) x_1(t) + \hat{F}_1(t), 
\end{align}
with the Hurwitz state matrix
\begin{align} \label{LTI_platoon_hom37b}
 F_1 = \begin{bmatrix}
    -\frac{1}{\tau_0}& \frac{1}{\tau_0}\\
     0 & -\frac{1}{h}
    \end{bmatrix}
\end{align}\vspace{-0.15cm}
and\vspace{-0.05cm} 
\begin{align} \label{LTI_platoon_hom38}
\tilde{F}_1(t) = \begin{bmatrix}
    -\frac{\tilde{\tau}_1(t)}{\tau_0(\tau_0-\tilde{\tau}_1(t))} &  \frac{\tilde{\tau}_1(t)}{\tau_0(\tau_0-\tilde{\tau}_1(t))}\\
    0 & 0
    \end{bmatrix}, \ \hat{F}_1(t) = \begin{bmatrix}
    0\\
    \frac{u_{r}(t)}{h}
    \end{bmatrix}.
\end{align}\vspace{-0.15cm}

For the other vehicles, we define $x_i=[e_i\  \nu_i\ a_i\ u_{bl_i}]^T$ and proceed by combining the platooning system \eqref{LTI_platoon}-\eqref{cont} with the auxiliary inputs \eqref{gain_hom2}, so as to obtain 
\begin{align} \label{LTI_platoon_hom37}
\dot{x}_i(t) = F x_i(t) + \tilde{F}_i(t) x_i(t) + \hat{F}_i(t), 
\end{align}
with\vspace{-0.3cm} 
\begin{align} \label{LTI_platoon_hom37a}
F \hspace{-0.075cm}=\hspace{-0.075cm} \begin{bmatrix}
    0 & 1 & -h & 0\\
    0 & 0 & -1& 0\\
    0 & 0 & -\frac{1}{\tau_0}& \frac{1}{\tau_0}\\
    \frac{K_{p_0}}{h} & \frac{K_{d_0}}{h} & -K_{d_0}& -\frac{1}{h}
    \end{bmatrix}, 
\end{align}\vspace{-0.3cm}
\begin{align} \label{LTI_platoon_hom37b}
\tilde{F}_i(t) \hspace{-0.075cm}=\hspace{-0.075cm} \begin{bmatrix}
    0 & 0 & 0 & 0\\
    0 & 0 & 0 & 0\\
    0 & 0 & \hspace{-0.1cm}\frac{-\tilde{\tau}_i(t)}{\tau_0(\tau_0-\tilde{\tau}_i(t))}\hspace{-0.075cm} & \hspace{-0.075cm}\frac{\tilde{\tau}_i(t)}{\tau_0(\tau_0-\tilde{\tau}_i(t))}\hspace{-0.025cm}\\
    \frac{\tilde{K}_{p_i}(t)}{h} &  \frac{\tilde{K}_{d_i}(t)}{h} & -\tilde{K}_{d_i}(t)  &0 
    \end{bmatrix}\hspace{-0.075cm}, \ 
\end{align}
and $\hat{F}_i(t) = [ 0 \ a_{i-1}(t) \  0\ \frac{u_{bl_{i-1}}(t)}{h}]^T$. Note that $F$ is homogeneous for all vehicles and is Hurwitz as a result of the Routh-Hurwitz conditions \cite[Sec.~5.5.3]{goodwin2001}
\begin{equation}\label{condtt}
 h > 0,\quad K_{p_0},K_{d_0} > 0,\quad K_{d_0}> \tau_0 K_{p_0}.
\end{equation}
We refer the reader to Remark 2 for details on why \eqref{condtt} holds for the group parameters, given that the controller of each vehicle is stable, i.e. given that the Routh-Hurwitz conditions $K_{p_i},K_{d_i} > 0$, $K_{d_i}> \tau_i K_{p_i}$ hold individually.

To derive exponential stability from \cite[Lemma 1]{8704911}, let us start with \eqref{LTI_platoon_hom37} in the absence of exogenous input $u_r$, i.e. $\hat{F}_1=0$. Noticing that the term $\tilde{\tau}_1 = \tau_0-\tau_1$ converges to zero exponentially due to Proposition 2, we have that $\tilde{F}_i$ in \eqref{LTI_platoon_hom38} is exponentially vanishing. Thus, exponential stability of $x_1$ is concluded. At this point, one can proceed iteratively from \eqref{LTI_platoon_hom37} for all vehicles $i=2,3,\ldots$. We have that $\tilde{F}_i$ in \eqref{LTI_platoon_hom37b} converges to zero exponentially because $\tilde{\tau}_i$, $\tilde{K}_{p_i}$, $\tilde{K}_{d_i}$ converge to zero exponentially due to Proposition 2. Meanwhile, $\hat{F}_i$ also converges to zero exponentially because exponential convergence of $a_{i-1}$, $u_{bl_{i-1}}$ has been previously shown. 
	
In other words, we obtain that the preceding vehicle $i-1$ affects vehicle $i$ with an exponentially vanishing disturbance, leading to exponential stability of $x_i$ for any vehicle in the platoon, which concludes the proof.
\end{proof}

\begin{remark}[{\color{black}\emph{Stability and string stability}}]
{\color{black}Note that, when each vehicle satisfies the Routh-Hurwitz condition  $K_{d_i} > \tau_{i} K_{p_i} = K_{p \tau_i}$, we have $\sum_i K_{d_i} > \sum_i K_{p \tau_i}$, i.e. 
the Routh-Hurwitz condition is satisfied on average. 
A common stabilizing gain for the group model is obtained via $\bar{K}_{p_i}(t) = \bar{K}_{p \tau_i}(t)/\bar{\tau}_{i}(t)$. In addition, as the transfer function from $a_{i-1}$ to $a_{i}$ for homogeneous vehicle dynamics is $\Gamma(s) = 1/(hs+1)$ \cite{7954618},  we achieve the standard notion of string stability
\begin{equation}\label{string_stab}
\left| \Gamma(j \omega) \right| \leq 1, \forall \omega \geq 0.
\end{equation}
In other words, convergence to a homogeneous group model leads to a stable and string stable behavior.}
\end{remark}

\section{Self-organization for resilient design}\label{Chap_design}

Having shown the properties of self-organization, we are in a position to use group behavior ideas to solve the two resilience problems in Sec.~II-D. Specifically, we will design a \emph{constrained group model} to face acceleration limits, and a \emph{group observer model} to face communication failures.

\subsection{Resilient design to {\color{black}acceleration} limits}
We refer the reader to Sec.~V-B for an example of the problems that acceleration limits can create. Denote with $a_{max_i}$  the maximum acceleration of vehicle $i$ and with $a_{min_i}$ its minimum acceleration (maximum deceleration). We will design a resilient mechanism based on a simple idea: limiting the behavior of the leading vehicle (i.e. limit its acceleration/deceleration) avoids that the follower vehicles hit their limits. Stemming from this idea, we need to find the worst-case acceleration limits in the platoon, i.e. the minimum among the maximum accelerations, and the maximum among the minimum accelerations. To this purpose, we use the max-min consensus \cite{CORTES2008726}, aiming to find the minimum and the maximum value in a network in a distributed way 
\begin{align}\label{cons_hom2}
\dot{\bar{a}}_{max_i}(t) = \sgn_-\left( \sum_{j \in \mathcal{N}_j} \left( \bar{a}_{max_j}(t)-\bar{a}_{max_j}(t)\right)\right) \nonumber \\ 
\dot{\bar{a}}_{min_i}(t) = \sgn_+\left( \sum_{j \in \mathcal{N}_j} \left( \bar{a}_{min_j}(t)-\bar{a}_{min_i}(t)\right)\right), 
\end{align}
where
\begin{align}\label{cons_hom22}
\sgn_+(x) = \begin{cases} 0 &\hspace{-0.2cm}{\rm if} \;  x \leq 0 \\
			1   &\hspace{-0.2cm}{\rm if} \;  x > 0 \end{cases}, \ 
			\sgn_-(x) = \begin{cases} 0 &\hspace{-0.2cm}{\rm if} \;  x \geq 0 \\
			1   &\hspace{-0.2cm}{\rm if} \;  x < 0 \end{cases}
\end{align}
and with initial conditions $$\bar{a}_{max_i}(0)=a_{max_i},\quad  \bar{a}_{min_i}(0)=a_{min_i}.$$ 
In other words, $\bar{a}_{max_i}$ and $\bar{a}_{min_i}$ represent the estimates of the maximum and minimum acceleration in the platoon according to vehicle $i$. Clearly, we want such estimates to converge consistently along the platoon, i.e. $\bar{a}_{max_i}$ should converge to a common value for all vehicles, and analogously for $\bar{a}_{min_i}$.  The following convergence result holds. 

\begin{proposition}[Max-min consensus \cite{CORTES2008726}]
For an undirected connected communication graph, the dynamics \eqref{cons_hom2} lead to
\begin{align}
\label{maxmin_cons_hom_b}
\lim_{t \rightarrow \infty} \bar{a}_{max_i}(t) &= \min_i\left[ a_{max_i} \right]=: a_{max_0} \nonumber \\
\lim_{t \rightarrow \infty} \bar{a}_{min_i}(t) &= \max_i\left[ a_{min_i} \right]=: a_{min_0}
\end{align}
in finite time. 
\end{proposition}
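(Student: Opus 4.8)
The plan is to recognize that \eqref{cons_hom2} decouples into two independent processes that are mirror images of one another: the estimates $\bar{a}_{max_i}$ run a distributed \emph{min}-consensus (their common limit $a_{max_0}=\min_i a_{max_i}$ is the smallest initial value), while the estimates $\bar{a}_{min_i}$ run a distributed \emph{max}-consensus. Under the sign reversal $x \mapsto -x$, which interchanges $\sgn_+$ and $\sgn_-$ in \eqref{cons_hom22}, one process maps to the other; hence it suffices to establish finite-time convergence for, say, the min-consensus governing $\bar{a}_{max_i}$, and the claim for $\bar{a}_{min_i}$ then follows by symmetry. Since this is exactly the max-min consensus of \cite{CORTES2008726}, I would invoke that framework for the regularity statements while giving a self-contained contraction argument for the limit.

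First I would establish invariance of the global extremum. A node $i$ that currently attains the minimum value satisfies $\bar{a}_{max_j}(t) \geq \bar{a}_{max_i}(t)$ for every neighbor $j \in \mathcal{N}_i$, so the argument of the sign function is nonnegative and its derivative vanishes; thus $\min_i \bar{a}_{max_i}(t)$ cannot decrease. Combined with the fact that every trajectory is nonincreasing under these min-consensus dynamics (an agent moves only while it has a strictly smaller neighbor), this shows $\min_i \bar{a}_{max_i}(t) \equiv a_{max_0}$ and that all trajectories remain in the compact interval $[a_{max_0},\, \max_i a_{max_i}]$, so the solutions are globally bounded and monotone.

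The core step is finite-time convergence, for which I would use the nonsmooth Lyapunov function given by the spread of the estimates,
\begin{equation}
V(t) = \max_{i} \bar{a}_{max_i}(t) - \min_{i} \bar{a}_{max_i}(t).
\end{equation}
Because the dynamics move each non-converged agent at unit rate, and because the communication graph is connected, at any time before consensus there exists an agent strictly above $a_{max_0}$ possessing a strictly smaller neighbor, which therefore strictly decreases; propagating this along a path in the connected graph shows that $V$ decreases with a slope bounded away from zero (the bound depending on the graph, e.g.\ through its diameter). Hence $V$ reaches $0$ in finite time, at which instant all $\bar{a}_{max_i}$ coincide and, by the invariance above, equal $a_{max_0}$. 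Applying the same reasoning to $-\bar{a}_{min_i}$ yields finite-time convergence of $\bar{a}_{min_i}$ to $a_{min_0}$, completing the claim.

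The main obstacle I anticipate is the nonsmoothness and discontinuity of the right-hand side of \eqref{cons_hom2}: solutions exist only in the Filippov (or Carath\'eodory) sense, and the analysis must handle the measure-zero set on which the sign argument is exactly zero, together with possible sliding and chattering. Making the decrease of $V$ rigorous therefore requires arguing along Filippov solutions and extracting a \emph{uniform} lower bound on the decrease rate from connectivity, rather than relying on pointwise differentiability. Since \cite{CORTES2008726} already develops precisely this nonsmooth-analysis machinery for max-min consensus, I would lean on it for the technical regularity, restricting the contribution of the present proof to verifying that \eqref{cons_hom2}, with the stated initial conditions $\bar{a}_{max_i}(0)=a_{max_i}$ and $\bar{a}_{min_i}(0)=a_{min_i}$, fits its hypotheses.
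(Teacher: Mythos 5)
The paper does not actually prove this proposition: it is imported as a known result from \cite{CORTES2008726}, so your attempt is necessarily a different route --- a self-contained reconstruction where the paper offers only a citation. Your overall architecture (symmetry between the two estimate families under $x \mapsto -x$, invariance of the global extremum, monotone bounded trajectories, and a finite-time contraction of the spread, with the Filippov/nonsmooth technicalities delegated back to \cite{CORTES2008726}) is the standard and essentially correct way to argue this, and you are right to identify the discontinuous right-hand side as the main technical burden.

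There is, however, one concrete flaw in your contraction step. You assert that any agent strictly above $a_{max_0}$ that possesses a strictly smaller neighbor ``therefore strictly decreases.'' This is false for the protocol \eqref{cons_hom2}, because the sign function is applied to the \emph{sum} of neighbor differences, not to their minimum: an agent with one strictly smaller neighbor and another sufficiently larger neighbor has a nonnegative argument and, by \eqref{cons_hom22}, does not move at all. (Take a path $1$--$2$--$3$ with values $0$, $5$, $10$: agent $2$ has a strictly smaller neighbor, yet its sum of neighbor differences is exactly zero.) The repair is to run the argument on the set of agents currently attaining the global maximum of $\bar{a}_{max_i}$: every such agent has a nonpositive sum, connectivity guarantees that at least one of them has a strictly smaller neighbor and hence a strictly negative sum, and a short propagation argument then shows the maximum itself decays at unit rate after an initial instant, which is what drives your spread function $V$ to zero in finite time. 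Separately, note that \eqref{cons_hom22} as printed has $\sgn_-$ returning $+1$ on negative arguments, under which $\bar{a}_{max_i}$ would \emph{increase} away from the desired minimum; your proof silently assumes the intended value $-1$ (so that the estimates are nonincreasing), which is almost certainly a typo in the paper but should be flagged rather than assumed.
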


\begin{remark}[No a-priori knowledge of the group model]
Albeit having different dynamics, the consensus mechanisms \eqref{cons_hom} and \eqref{cons_hom2} obey a similar philosophy: the variables 
$(\bar{a}_{min_i},\bar{a}_{max_i})$ are initialized according to the vehicle parameters 
and shared in a distributed way among neighboring vehicles, until convergence to common values $a_{max_0}$, $a_{min_0}$ in \eqref{maxmin_cons_hom_b}. These common values, which are a-priori unknown, 
can be regarded as an autonomously {\color{black}generated} \emph{constrained common group model}.
\end{remark}

Once the worst-case acceleration limits in the platoon are available, the desired acceleration can be limited within the same acceleration constraints arising from the constrained common group model. Specifically, 
one can modify the control input \eqref{cont} as: 
\begin{equation}\label{awu}
h \dot{u}_{bl_i} = \begin{cases} 0 &{\rm if} \  u_{i} \hspace{-0.05cm}=\hspace{-0.05cm} \bar{a}_{max_i} \ {\rm and} \ \hspace{-0.05cm}-\hspace{-0.05cm}u_{bl_i} \hspace{-0.05cm}+\hspace{-0.05cm} \xi_{i}\geq 0\vspace{-0.3cm}\\ \\
			-u_{bl_i} + \xi_{i}   & {{\rm if} \  \bar{a}_{min_i} \hspace{-0.05cm}<\hspace{-0.05cm} u_{i} \hspace{-0.05cm}<\hspace{-0.05cm} \bar{a}_{max_i} }\\
		 & \ {\rm or} \  u_{i} \hspace{-0.05cm}=\hspace{-0.05cm} \bar{a}_{max_i} \ {\rm and} \ \hspace{-0.05cm}-\hspace{-0.05cm}u_{bl_i} \hspace{-0.05cm}+\hspace{-0.05cm} \xi_{i}< 0\\ 
		  & \ {\rm or} \  u_{i} \hspace{-0.05cm}=\hspace{-0.05cm} \bar{a}_{min_i} \ {\rm and} 	 \ -u_{bl_i} \hspace{-0.05cm}+\hspace{-0.05cm} \xi_{i} > 0\vspace{-0.3cm}\\ \\
			0 & {\rm if} \  u_{i} \hspace{-0.05cm}=\hspace{-0.05cm} \bar{a}_{min_i} \ {\rm and} \ \hspace{-0.05cm}-\hspace{-0.05cm}u_{bl_i} \hspace{-0.05cm}+\hspace{-0.05cm} \xi_{i}\leq 0. \end{cases}
\end{equation}

\noindent Analogously, since the worst-case bounds $\bar{a}_{max_i}$, $\bar{a}_{min_i}$ can be applied also to the acceleration, one can constrain the acceleration within the same limits, that is $a_i \in [\bar{a}_{min_i}, \bar{a}_{max_i}]$. The proposed resilient approach can be sketched {\color{black} via the following procedure}:

\begin{algorithm}
\floatname{algorithm}{Resilient Strategy}
\caption{(against acceleration limits)}
\label{array-sum0}
\begin{algorithmic}[1]
\Require initial conditions $a_{max_i}$, $a_{min_i}$. 
\Ensure the constrained control \eqref{awu} with constrained acceleration $a_i \in [\bar{a}_{min_i}, \bar{a}_{max_i}]$.  \vspace{0.15cm}
\State Online, at time $t$: Run the max-min consensus dynamics  \eqref{cons_hom2} to receive $\bar{a}_{min_i}(t)$, $\bar{a}_{max_i}(t)$; 
\State Online, at time $t$: use $\bar{a}_{min_i}(t)$, $\bar{a}_{max_i}(t)$  to realize the constrained input \eqref{awu} and the constrained acceleration $a_i \in [\bar{a}_{min_i}(t), \bar{a}_{max_i}(t)]$.
\end{algorithmic}
\end{algorithm}

\subsection{Resilient design to communication failures}

Without communication, the CACC law in \eqref{cont} cannot make use of $u_{bl_{i-1}}$, i.e., the control input degenerates to a standard ACC \cite{6338839,6669336} where only on-board radar sensing is used:
\begin{equation}\label{cont_hom_nocomm}
h\dot{u}_{bl_i}(t) = -u_{bl_i}(t) + K_{p_0} e_i(t) + K_{d_0} \dot{e}_i(t). 
\end{equation}
We refer the reader to Sec.~V-B for an example showing the adverse effects of losing communication. To obtain resilience, we propose an observer to estimate the missing $u_{bl_{i-1}}$. As communication is inactive, let us assume that all vehicles already homogenized by means of \eqref{cons_hom} when communication was active, so that the vehicle dynamics have converged to $K_{p_0}$, $K_{d_0}$, $\tau_0$ in \eqref{cons_hom_b}. The observer design follows three steps: 
\begin{itemize}
\item[1)] Forming a model for the homogenized dynamics of vehicle $i-1$, i.e. the preceding vehicle; 
\item[2)] Designing a high-gain observer for some auxiliary variables of the homogenized preceding vehicle; 
\item[3)] Designing a sliding-mode observer for the missing 
$u_{bl_{i-1}}$.
\end{itemize}
Thanks to self-organization \eqref{cons_hom}, all preceding vehicles $i-1$ exhibit the same homogenized dynamics: as a result, all vehicles design the same observer based on the same preceding vehicle dynamics. This concept would not be applicable in ACC with heterogeneous vehicles, unless assuming the platoon to be homogeneous or proceeding at constant speed \cite{8550585,9497782}. The observer design below does not require these assumptions.   


\subsubsection{Dynamics of the homogenized preceding vehicle} As all vehicles have converged to homogeneous dynamics via \eqref{cons_hom}, using 
\eqref{LTI_platoon_hom}-\eqref{cont_hom} with index $i-1$, one obtains the homogenized dynamics of vehicle $i-1$ as: 
\begin{align} \label{LTI_platoon_obs}
    \begin{bmatrix}
    \hspace{-0.05cm}\dot{v}_{i-1}(t)\hspace{-0.05cm}\\
    \hspace{-0.05cm}\dot{a}_{i-1}(t)\hspace{-0.05cm}\\
    \hspace{-0.05cm}\dot{u}_{bl_{i-1}}(t)\hspace{-0.05cm}
    \end{bmatrix} &\hspace{-0.1cm}=\hspace{-0.1cm} \underbrace{\begin{bmatrix}
    0 &  \hspace{-0.075cm}1\hspace{-0.075cm} & \hspace{-0.075cm}0\\
    0 & \hspace{-0.075cm}-\frac{1}{\tau_0}\hspace{-0.075cm} & \hspace{-0.075cm}\frac{1}{\tau_0}\\
    \hspace{-0.075cm}-K_{p_0}\hspace{-0.075cm}-\hspace{-0.075cm}\frac{K_{d_0}}{h}\hspace{-0.075cm} & \hspace{-0.075cm}-K_{d_0}\hspace{-0.075cm} & \hspace{-0.075cm}-\frac{1}{h}
    \end{bmatrix}}_{A_o}\hspace{-0.05cm}
    \underbrace{\begin{bmatrix}
    \hspace{-0.05cm}v_{i-1}(t)\hspace{-0.05cm}\\\hspace{-0.05cm}a_{i-1}(t)\hspace{-0.05cm}\\\hspace{-0.05cm}u_{bl_{i-1}}(t)\hspace{-0.05cm}
    \end{bmatrix}}_{x_{o_{i-1}}(t)} \vspace{-0.8cm} 
		\end{align}\vspace{-0.55cm}
		\begin{align}
		& 
		+ \underbrace{\begin{bmatrix}
    0\\
    0\\
    \frac{1}{h}
    \end{bmatrix}}_{B_o}
		\underbrace{\begin{bmatrix}
    K_{p_0}(q_{i-2}(t)-q_{i-1}(t))+K_{d_0}v_{i-2}(t)+u_{bl_{i-2}}(t)
    \end{bmatrix}}_{\nu_{o_{i-1}}(t)}, \nonumber
\end{align}
where subscript $o$ refers to the \emph{observer}. Among the components of $x_{o_{i-1}}$,  
only $v_{i-1}$ is measurable to vehicle $i$ from on-board sensors (by adding the tachometer measurement $v_i$ to the radar measurement $v_{i-1}-v_{i}$): $a_{i-1}$ and $u_{bl_{i-1}}$ are not available during failures. The quantity $\nu_{o_{i-1}}$ in \eqref{LTI_platoon_obs} must be regarded as an \emph{unknown input} as it contains signals not available to vehicle $i$. 
Therefore, vehicle $i$ should observe (i.e. reconstruct) the state in \eqref{LTI_platoon_obs} using available measurements. Due to the unknown input term $\nu_{o_{i-1}}$ in \eqref{LTI_platoon_obs}, the type of observer we seek is an \emph{unknown input observer}. Specifically, we design a sliding-mode observer, in line with \cite{KALSI2010347}.

\subsubsection{Auxiliary variables} 
In the state $x_{o_{i-1}}$ in \eqref{LTI_platoon_obs}, the only measurement available to vehicle $i$ is
\begin{align} \label{LTI_platoon_obs_match}
v_{i-1}(t) = H_o x_{o_{i-1}}(t),
\end{align}
where $H_o = [1 \ \ 0 \ \ 0]$. However, the dynamics \eqref{LTI_platoon_obs} do not satisfy the {\color{black}observer matching condition needed for standard unknown input observer design \cite{KALSI2010347}, i.e. we have
\begin{align} \label{LTI_platoon_obs_match}
\rank(B_o) \neq \rank(B_o H_o).
\end{align}
This implies that the measurement $v_{i-1}$ in \eqref{LTI_platoon_obs_match} alone is not enough to reconstruct the missing signal $u_{bl_{i-1}}$.} 
To this purpose, let us combine $H_o$ in \eqref{LTI_platoon_obs_match} with {\color{black}the auxiliary matrix} 
\begin{align} \label{LTI_platoon_obs_aux}
C_o = \begin{bmatrix}
    H_o  \\ H_o A_o \\ H_o A_o^2
    \end{bmatrix} = \begin{bmatrix}
    1 & 0 & 0  \\ 0 & 1 & 0  \\ 0 & -\frac{1}{\tau_0} & \frac{1}{\tau_0} 
    \end{bmatrix}.
\end{align}
Note that the variables $C_o x_{o_{i-1}}$ defined by \eqref{LTI_platoon_obs_aux} are $v_{i-1}$, $a_{i-1}$, $j_{i-1}$, i.e. the velocity, acceleration and jerk of vehicle $i-1$. The physical interpretation of these variables allows a more direct design than \cite{KALSI2010347}, namely, the chain-of-integrators structure of these variables allows to design the high-gain observer 
\begin{align} \label{LTI_platoon_highgain}
    \begin{bmatrix}
    \dot{\bar{v}}_{i-1}(t)\\
    \dot{\bar{a}}_{i-1}(t)\\
    \dot{\bar{j}}_{i-1}(t)
    \end{bmatrix} \hspace{-0.1cm}=\hspace{-0.1cm} \begin{bmatrix}
    0 & \hspace{-0.1cm}1\hspace{-0.1cm} & 0\\
    0 & \hspace{-0.1cm}0\hspace{-0.1cm} & 1\\
    0 & \hspace{-0.1cm}0\hspace{-0.1cm} & 0
    \end{bmatrix}\hspace{-0.1cm} \begin{bmatrix}
    \bar{v}_{i-1}(t)\\
    \bar{a}_{i-1}(t)\\
    \bar{j}_{i-1}(t)
    \end{bmatrix} \hspace{-0.1cm}+\hspace{-0.1cm} \begin{bmatrix}
    \frac{\alpha_0}{\epsilon}\\
    \frac{\alpha_1}{\epsilon^2}\\
    \frac{\alpha_2}{\epsilon^3}
    \end{bmatrix}\hspace{-0.1cm}(v_{i-1}(t)\hspace{-0.1cm}-\hspace{-0.1cm}\bar{v}_{i-1}(t)),
\end{align}
where the state comprises the reconstructed $v_{i-1}$, $a_{i-1}$, $j_{i-1}$; $\epsilon>0$ is a small design constant; $\alpha_0$, $\alpha_1$, $\alpha_2$ are selected to make the observer error dynamics stable, i.e. the polynomial $\lambda^3 + \alpha_0 \lambda^2+ \alpha_1 \lambda+ \alpha_2$ is Hurwitz. The state of the high-gain observer \eqref{LTI_platoon_highgain} is now used to reconstruct the missing $u_{bl_{i-1}}$. 

\subsubsection{Observer for missing signal $u_{bl_{i-1}}$} The unknown input observer  is selected as 
\begin{align} \label{LTI_platoon_unkinput}
\dot{\hat{x}}_{o_{i-1}}(t) &= A_o \hat{x}_{o_{i-1}}(t) + B_o E_o(t) \nonumber\\
&\qquad+ L_o (\bar{y}_{{i-1}}(t) - \hat{y}_{i-1}(t)),
\end{align}
where $\hat{y}_{i-1}(t)=C_o \hat{x}_{o_{i-1}}(t)$ are the reconstructed {\color{black}auxiliary variables}, $L_o$ is such that $(A_o-L_oC_o)$ is Hurwitz, and 
\begin{align} \label{LTI_platoon_unkinput3}
   \hat{x}_{o_{i-1}}(t)\hspace{-0.08cm} = \hspace{-0.08cm}\begin{bmatrix}
    \hat{v}_{i-1}(t)\\ \hat{a}_{i-1}(t)\\ \hat{u}_{bl_{i-1}}(t)
    \hspace{-0.08cm}\end{bmatrix}, \  \bar{y}_{{i-1}}(t) \hspace{-0.08cm}= \hspace{-0.08cm}\begin{bmatrix}
    v_{i-1}(t)\\
    S_a \sat(\frac{\bar{a}_{i-1}(t)}{S_a})\\
    S_j \sat(\frac{\bar{j}_{i-1}(t)}{S_j})
    \hspace{-0.08cm}\end{bmatrix},
\end{align}
\vspace{-0.65cm}
\begin{align} \label{LTI_platoon_unkinput2}
 & E_o(t) =  \\
 &\ \ \left\{\begin{array}{ll}  
\eta \frac{F_o(\bar{y}_{{i-1}}(t)-\hat{y}_{i-1}(t))}{\left\|\bar{y}_{{i-1}}(t)-\hat{y}_{i-1}(t)\right\|} & {\rm if} \;  F_o (\bar{y}_{{i-1}}(t)-\hat{y}_{i-1}(t)) \neq 0 \\
 0 & {\rm otherwise,} 
\end{array}\right. \nonumber
\end{align}
with $\sat(\cdot)$ being a standard saturation function in the range $\pm 1$. In \eqref{LTI_platoon_unkinput3}-\eqref{LTI_platoon_unkinput2}, $\eta>0$ is a sliding mode gain, $S_a$, $S_j>0$ are bounds for maximum acceleration and jerk in the platoon, and  $F_o$ is designed such that
\begin{align} \label{LTI_platoon_unkinput4}
   & (A_o-L_oC_o)^T P_o + P_o (A_o-L_oC_o) = -2 Q_o, \nonumber \\
	& F_o C_o = B_o^T P_o, \quad P_o, Q_o >0,
\end{align}
for some $Q_o>0$. {\color{black}Using 
 the notation $p_{o_{ij}}$ for the entry $(i,j)$ of $P_o\in \mathbb{R}^{3 \times 3}$, it is obtained from direct calculation of $F_o C_o = B_o^T P_o$ in \eqref{LTI_platoon_unkinput4} so that $F_o \in \mathbb{R}^{1 \times 3}$}  takes the form
\begin{align} \label{LTI_platoon_unkinput5}
F_o  = \begin{bmatrix}
    \frac{p_{o_{13}}}{h} & \frac{p_{o_{32}}+p_{o_{33}}}{h} & \frac{p_{o_{33}}}{h}\tau_0
    \end{bmatrix},
\end{align}
to be used in the observer \eqref{LTI_platoon_unkinput2}. The following result holds.

\begin{proposition}[Unknown input observer \cite{KALSI2010347}]
For the vehicle dynamics \eqref{LTI_platoon_obs}, the observer  \eqref{LTI_platoon_highgain}-\eqref{LTI_platoon_unkinput4} guarantees that there exists a constant $\epsilon^* \in (0, 1)$ such that, if $\epsilon \in (0, \epsilon^*)$ and $\eta \geq \left|\nu_{o_{i-1}}(t)\right|$, 
the state estimation error $x_{o_{i-1}}(t)-\hat{x}_{o_{i-1}}(t)$ is uniformly ultimately
bounded by a function $\kappa(\epsilon)$ that satisfies $\kappa(\epsilon)\rightarrow 0$ when $\epsilon\rightarrow 0$.
\end{proposition}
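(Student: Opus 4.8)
The plan is to read \eqref{LTI_platoon_highgain}--\eqref{LTI_platoon_unkinput4} as a two-stage cascade---a fast high-gain observer that reconstructs a chain of integrators, feeding a sliding-mode unknown-input observer---and to invoke the design of \cite{KALSI2010347} once the observer matching condition is restored. The first step is to certify that the augmented output $C_o$ in \eqref{LTI_platoon_obs_aux} repairs the matching failure that $H_o$ exhibits on its own. A direct computation gives $H_o B_o = 0$, $H_o A_o B_o = 0$ and $H_o A_o^2 B_o = 1/(\tau_0 h) \neq 0$, so that $C_o B_o = [\,0\ \ 0\ \ 1/(\tau_0 h)\,]^T$ and hence $\rank(C_o B_o) = \rank(B_o) = 1$. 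This equality is exactly the matching condition of \cite{KALSI2010347}, and it is what makes the unknown input $\nu_{o_{i-1}}$---and therefore the missing signal $u_{bl_{i-1}}$---reconstructible, provided the three virtual measurements $v_{i-1},a_{i-1},j_{i-1}=C_o x_{o_{i-1}}$ are available.

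Since only $v_{i-1}$ is physically measured, the second step is to supply $a_{i-1}$ and $j_{i-1}$ through the high-gain observer \eqref{LTI_platoon_highgain}. Because $C_o x_{o_{i-1}}$ has the triple-integrator structure $\dot v = a,\ \dot a = j,\ \dot j = \phi(t)$ with $\phi$ bounded along trajectories of \eqref{LTI_platoon_obs}, I would rescale the estimation error by powers of $\epsilon$ and use the Hurwitz polynomial $\lambda^3+\alpha_0\lambda^2+\alpha_1\lambda+\alpha_2$ to obtain a standard high-gain estimate: the scaled error decays on the fast time scale $t/\epsilon$, and in steady state the reconstruction error of $(a_{i-1},j_{i-1})$ is $O(\epsilon)$. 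The initial peaking intrinsic to high-gain observers is neutralized by the saturations $S_a\,\sat(\cdot)$, $S_j\,\sat(\cdot)$ in $\bar y_{i-1}$ of \eqref{LTI_platoon_unkinput3}, so the signal $\bar y_{i-1}$ handed to the sliding-mode stage differs from $C_o x_{o_{i-1}}$ by a bounded residual $\delta(t)$ whose ultimate size is governed by, and vanishes with, $\epsilon$.

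The third step analyses the sliding-mode stage. Writing $e_o = x_{o_{i-1}}-\hat x_{o_{i-1}}$ and subtracting \eqref{LTI_platoon_unkinput} from \eqref{LTI_platoon_obs} gives
\begin{equation*}
\dot e_o = (A_o - L_o C_o)\,e_o + B_o(\nu_{o_{i-1}} - E_o) - L_o\,\delta ,
\end{equation*}
where $\delta$ is the high-gain residual identified above. With the Lyapunov function $V = e_o^T P_o e_o$ and \eqref{LTI_platoon_unkinput4},
\begin{equation*}
\dot V = -2\,e_o^T Q_o e_o + 2\,e_o^T P_o B_o(\nu_{o_{i-1}} - E_o) - 2\,e_o^T P_o L_o\,\delta .
\end{equation*}
The key algebraic manoeuvre is $e_o^T P_o B_o = (B_o^T P_o e_o)^T = (F_o C_o e_o)^T$, which turns the cross term into a function of the innovation $F_o(\bar y_{i-1}-\hat y_{i-1})$ that drives $E_o$ in \eqref{LTI_platoon_unkinput2}. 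Using $\eta \ge |\nu_{o_{i-1}}(t)|$, the discontinuous term $E_o$ dominates $\nu_{o_{i-1}}$, rendering $2\,e_o^T P_o B_o(\nu_{o_{i-1}}-E_o)$ non-positive up to a contribution proportional to $\|\delta\|$; the remaining $-2\,e_o^T P_o L_o\,\delta$ is likewise linear in $\|\delta\|$. Completing the bound yields $\dot V \le -2\lambda_{\min}(Q_o)\|e_o\|^2 + c\,\|e_o\|\,\|\delta\|$ for some constant $c$, whence $e_o$ is uniformly ultimately bounded by a multiple of $\sup_t\|\delta(t)\|$.

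Finally, combining the two stages gives $\kappa(\epsilon) \propto \sup_t\|\delta(t)\| \to 0$ as $\epsilon \to 0$, and the existence of $\epsilon^*\in(0,1)$ follows from requiring the fast high-gain dynamics to settle before its residual is fed forward. The main obstacle is exactly this two-time-scale coupling: the sliding-mode observer is driven by estimated rather than true auxiliary outputs, so one must show---via a singular-perturbation / cascade argument with the non-smooth term $E_o$---that for sufficiently small $\epsilon$ the high-gain residual $\delta$ enters the slower sliding-mode loop as a vanishing perturbation rather than destabilizing the sliding motion. Controlling the interplay between peaking, saturation and the sliding surface, and extracting the explicit $\kappa(\epsilon)$, is where the bulk of the technical work lies.
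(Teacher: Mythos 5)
The paper does not actually prove this proposition: it is imported from the cited reference \cite{KALSI2010347}, with the matrices $(A_o,B_o,C_o,L_o,F_o)$ instantiated for the homogenized preceding-vehicle dynamics. Your sketch is therefore not competing with an in-paper argument but reconstructing the cited one, and it does so faithfully: the restoration of the observer matching condition via $C_o$ (your computation $H_oB_o=H_oA_oB_o=0$, $H_oA_o^2B_o=1/(\tau_0 h)\neq 0$, hence $\rank(C_oB_o)=\rank(B_o)$, is exactly why the paper stacks $H_o$, $H_oA_o$, $H_oA_o^2$ and why the auxiliary outputs acquire the velocity/acceleration/jerk interpretation), the high-gain reconstruction of the unmeasured auxiliary outputs with residual vanishing as $\epsilon\to 0$ and saturation to suppress peaking, and the Lyapunov step exploiting $F_oC_o=B_o^TP_o$ from \eqref{LTI_platoon_unkinput4} are precisely the three ingredients of that design, and your closing discussion correctly identifies the two-time-scale coupling as where the real technical effort sits. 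One detail to tighten: with $E_o$ normalized by $\|\bar y_{i-1}-\hat y_{i-1}\|$ as written in \eqref{LTI_platoon_unkinput2}, the condition $\eta\geq|\nu_{o_{i-1}}|$ does not by itself make $2e_o^TP_oB_o(\nu_{o_{i-1}}-E_o)$ nonpositive, since (taking $\delta=0$) the damping term scales as $\eta\,(F_oC_oe_o)^2/\|C_oe_o\|$ while the drift term scales as $|F_oC_oe_o|\,|\nu_{o_{i-1}}|$, and these are comparable only if $|F_oC_oe_o|\geq\|C_oe_o\|$; the clean dominance argument requires normalizing by $|F_o(\bar y_{i-1}-\hat y_{i-1})|$ (for a scalar unknown input, $E_o=\eta\,\sgn(F_o(\bar y_{i-1}-\hat y_{i-1}))$), which is the form used in \cite{KALSI2010347}. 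Apart from that normalization issue, which arguably originates in the paper's own transcription of the observer rather than in your argument, your outline is a correct proof of the statement.
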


Proposition 4 suggests that the control input is finally
\begin{equation}\label{cont_hom_obser}
h\dot{u}_{bl_i}(t) = -u_{bl_i}(t) + K_{p_0} e_i(t) + K_{d_0} \dot{e}_i(t) + \hat{u}_{bl_{i-1}}(t),
\end{equation}
where the missing $u_{bl_{i-1}}(t)$ has been replaced by $\hat{u}_{bl_{i-1}}(t)$. 


\begin{remark}[\emph{Homogeneous a-priori unknown observer}]
Each vehicle $i$ does not use the heterogeneous parameters $\tau_{i-1}$, $K_{p_{i-1}}$, $K_{d_{i-1}}$ of the preceding vehicle $i-1$ to design the observer, but homogenized preceding vehicle dynamics defined by $(A_o,B_o)$ in \eqref{LTI_platoon_obs}. As $(A_o,B_o)$ stems from the common group behavior originating from $K_{p_0}$, $K_{d_0}$, $\tau_0$, each vehicle ends up designing the same observer, regarded as a \emph{common group observer}.
\end{remark}

The proposed resilient approach can be sketched along the aforementioned three steps as:

\begin{algorithm}
\floatname{algorithm}{Resilient Strategy}
\caption{(against communication failures)}
\label{array-sum}
\begin{algorithmic}[1]
\Require a-priori unknown $K_{p_0}$, $K_{d_0}$, $\tau_0$ from convergence of \eqref{cons_hom}; Lyapunov equation matrix $Q_o$ and observer gain $L_o$; \newline 
sliding gain $\eta$ and high-gain observer gains $\epsilon$, $\alpha_0$, $\alpha_1$, $\alpha_2$. 
\Ensure control \eqref{cont_hom_obser}, with the reconstructed $\hat{u}_{bl_{i-1}}(t)$. 
\vspace{0.05cm}
\State Offline: form a copy of the homogenized preceding vehicle matrices $(A_o,B_o)$ as in \eqref{LTI_platoon_obs};
\State Online, at time $t$: 
reconstruct $v_{i-1}(t)$, $a_{i-1}(t)$, $j_{i-1}(t)$ 
as $\bar{v}_{i-1}(t)$, $\bar{a}_{i-1}(t)$, $\bar{j}_{i-1}(t)$ via the high-gain observer  \eqref{LTI_platoon_highgain}; 
\State Online, at time $t$: 
reconstruct the missing $u_{bl_{i-1}}(t)$ as $\hat{u}_{bl_{i-1}}(t)$ via the 
observer \eqref{LTI_platoon_unkinput}, and use $\hat{u}_{bl_{i-1}}(t)$ in \eqref{cont_hom_obser}.
\end{algorithmic}\vspace{-0.1cm}
\end{algorithm}

\section{Numerical experiments} \label{Chap_simu}

The numerical experiments are organized as follows. Sec.~V-A evaluates the proposed resilient designs based on self-organization. 
{\color{black}Secs.~V-B and V-C provide implementations of the proposed ideas in CommonRoad \cite{7995802} and CARLA \cite{xu2021opencda}: the former tests full-range operation by means of the safety architecture in Sec.~II-C; the latter tests the proposed methods with the vehicle/sensor models available in CARLA.} {\color{black}Sec.~V-D considers cut-in and merging scenarios, to validate the effectiveness 
in more complex traffic scenarios.} {\color{black}Numerical experiments are carried out in presence of zero-mean Gaussian noise to simulate radar measurement noise (variance 0.025 m$^2$/s$^2$), tachometer measurement noise (variance 0.25 m$^2$/s$^2$) and accelerometer measurement noise (variance 0.1 m$^4$/s$^4$). These noises prevent ideal asymptotic convergence of the errors, but evaluate the proposed strategies in a more realistic setting and test the sensitivity of controllers and observers to noise}.

%

\begin{table}[h]
\captionsetup{width=.75\textwidth}
\caption{Heterogeneous parameters of the platoon.}
\label{param}
\centering
{\footnotesize
\begin{tabular}{lllllll}
\toprule
 & 1 & 2 & 3 & 4 & 5 & 6\\
\midrule
$\tau_i $ & 0.10 & 0.20 & 0.05 & 0.30 & 0.15 & 0.075\\
$K_{p_i} $ & 0.20 & 0.10 & 0.40 & 0.067 & 0.133 & 0.267\\
$K_{d_i} $ & 0.70 & 0.35 & 1.40 & 0.23 & 0.467 & 0.933\\
\bottomrule
\end{tabular}
}
\end{table}



\subsection{Resilience to {\color{black}acceleration} limits and communication failures}

Let us consider a CACC-equipped platoon of one leading and five following vehicles, with headway $h=0.7$ and heterogeneous parameters as in Tab.~\ref{param}. To clarify the adverse effect of {\color{black}acceleration} limits towards formation keeping, consider the heterogeneous limits in Tab.~\ref{trueP2}. 
We let the leader accelerate, then decelerate, within its acceleration limits $\pm 0.425$ [m/s$^2$]. 

Fig.~\ref{vel_sat} shows that several followers with even tighter acceleration limits than the leader are not able to follow the leader velocity and acceleration.
As a result, Fig.~\ref{error_sat} shows that the platoon first loses cohesiveness during the acceleration phase, and then a collision happens during the deceleration phase. This can be seen from the relative distance crossing zero at $t \approx 110$s. The collision is caused by the fact that most vehicles cannot follow the deceleration of the leader due to their deceleration limits. 

\begin{table}[h]
\captionsetup{width=.75\textwidth}
\caption{\text{Heterogeneous limits of the platoon.}}
\label{trueP2}
\centering
{\footnotesize
\begin{tabular}{lllllll}
\toprule
  & 1 & 2 & 3 & 4 & 5 & 6\\
\midrule
$a_{max_i}$ & \;0.425 & \;0.35 & \;0.375 & \;0.40 & \;0.325 & \;0.45\\
$a_{min_i}$ & -0.425 & -0.35 & -0.375 & -0.40 & -0.325 & -0.45\\
\bottomrule
\end{tabular}
}
\end{table}

\begin{figure}[t]
      \centering
			\begin{subfigure}{0.5\textwidth}
      \centering
      \includegraphics[width=0.99\textwidth]{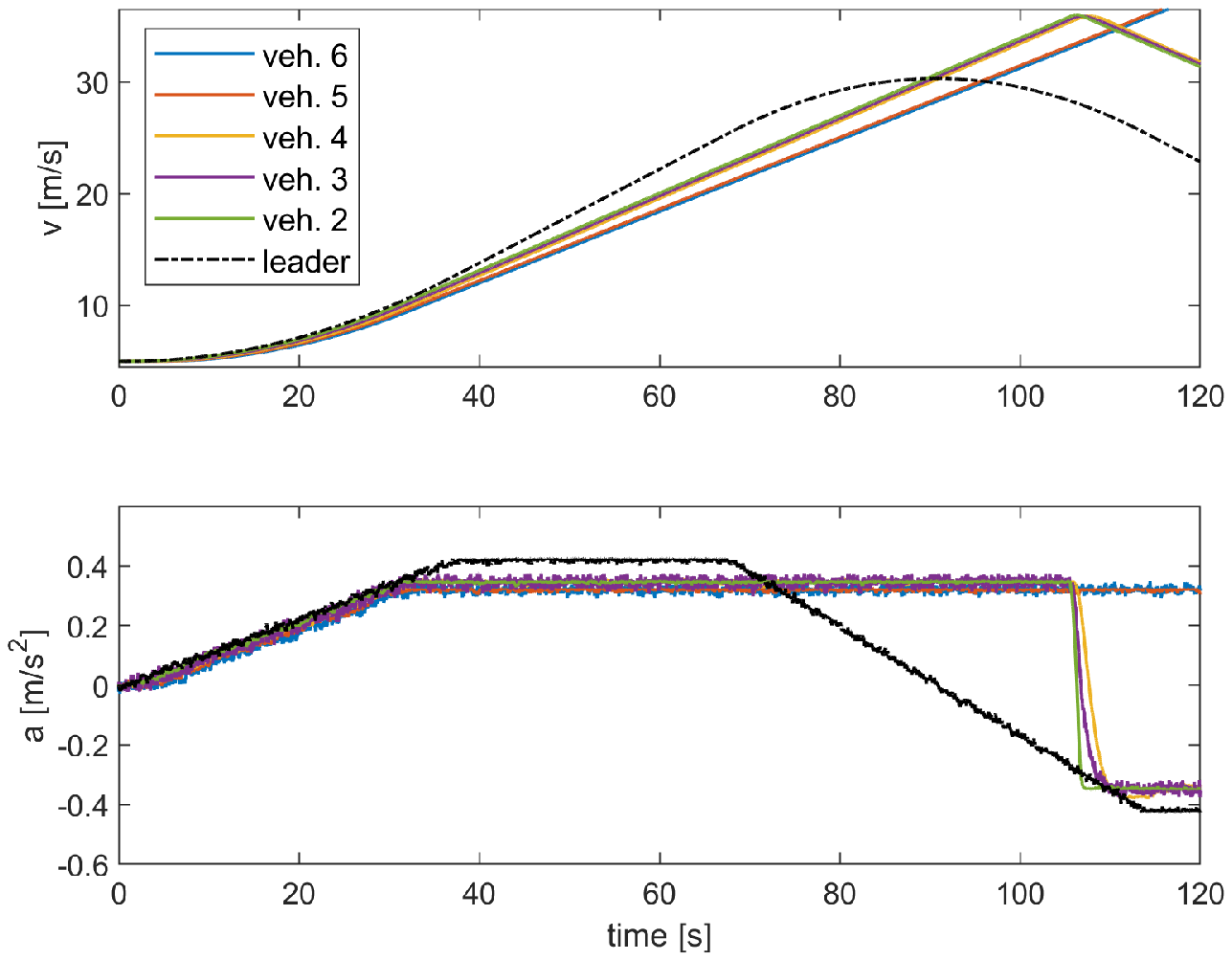}
  \caption{Velocity (upper plot) and acceleration (lower plot) for all \newline vehicles. 
	The platoon cannot follow the leader.}
  \label{vel_sat}
\end{subfigure}
    ~ 
\begin{subfigure}{0.5\textwidth}
  \centering
      \includegraphics[width=0.99\textwidth]{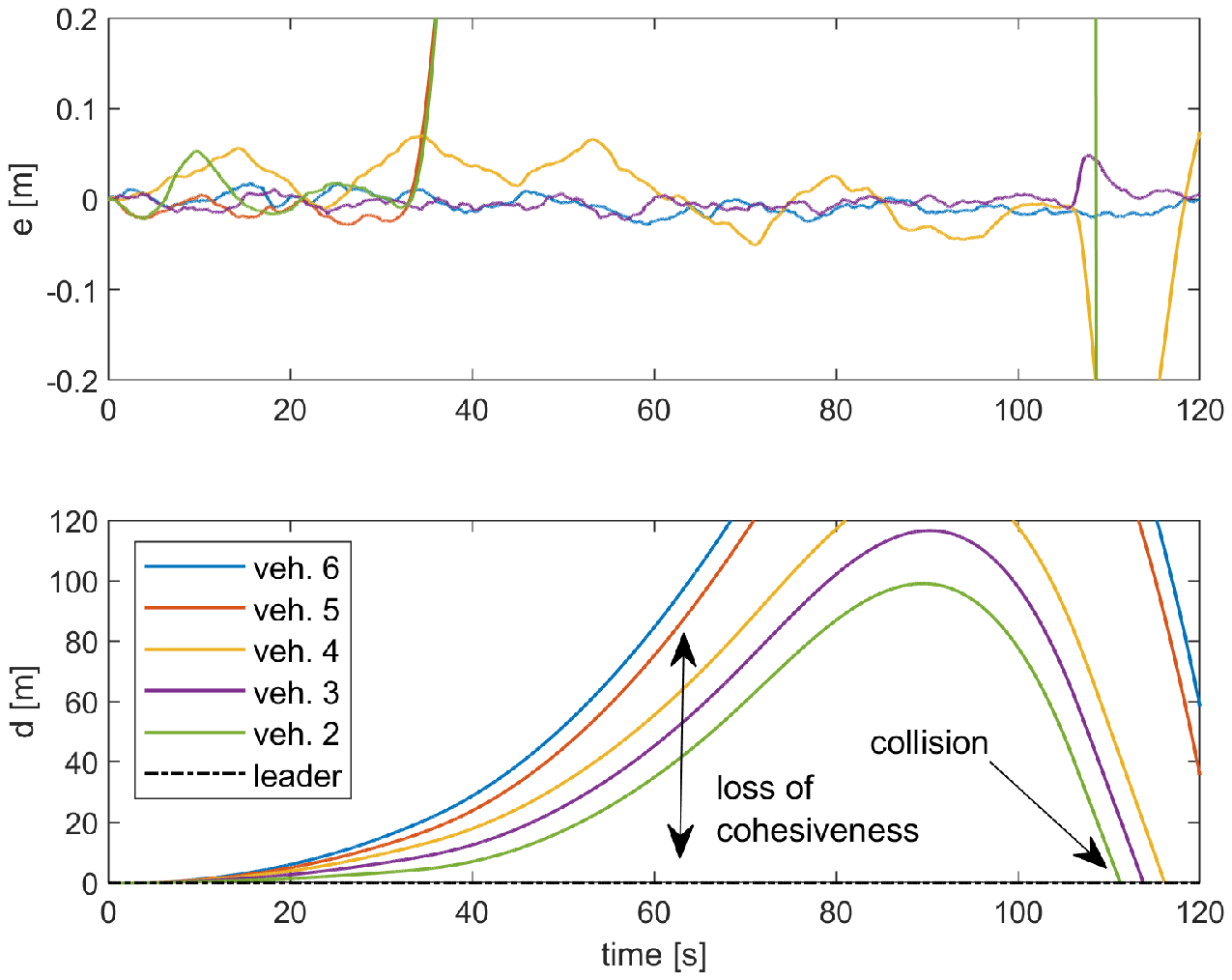}
  \caption{Spacing error (upper plot) and distance to leading vehicle \newline (lower plot). A collision happens in the platoon.}
  \label{error_sat}
\end{subfigure}
\caption{{\color{black}Acceleration} limits, without resilient strategy.}
\label{TOT_sat}
\end{figure}

\begin{figure}[t]
      \centering
			\begin{subfigure}{0.5\textwidth}
      \centering
  \includegraphics[width=0.99\textwidth]{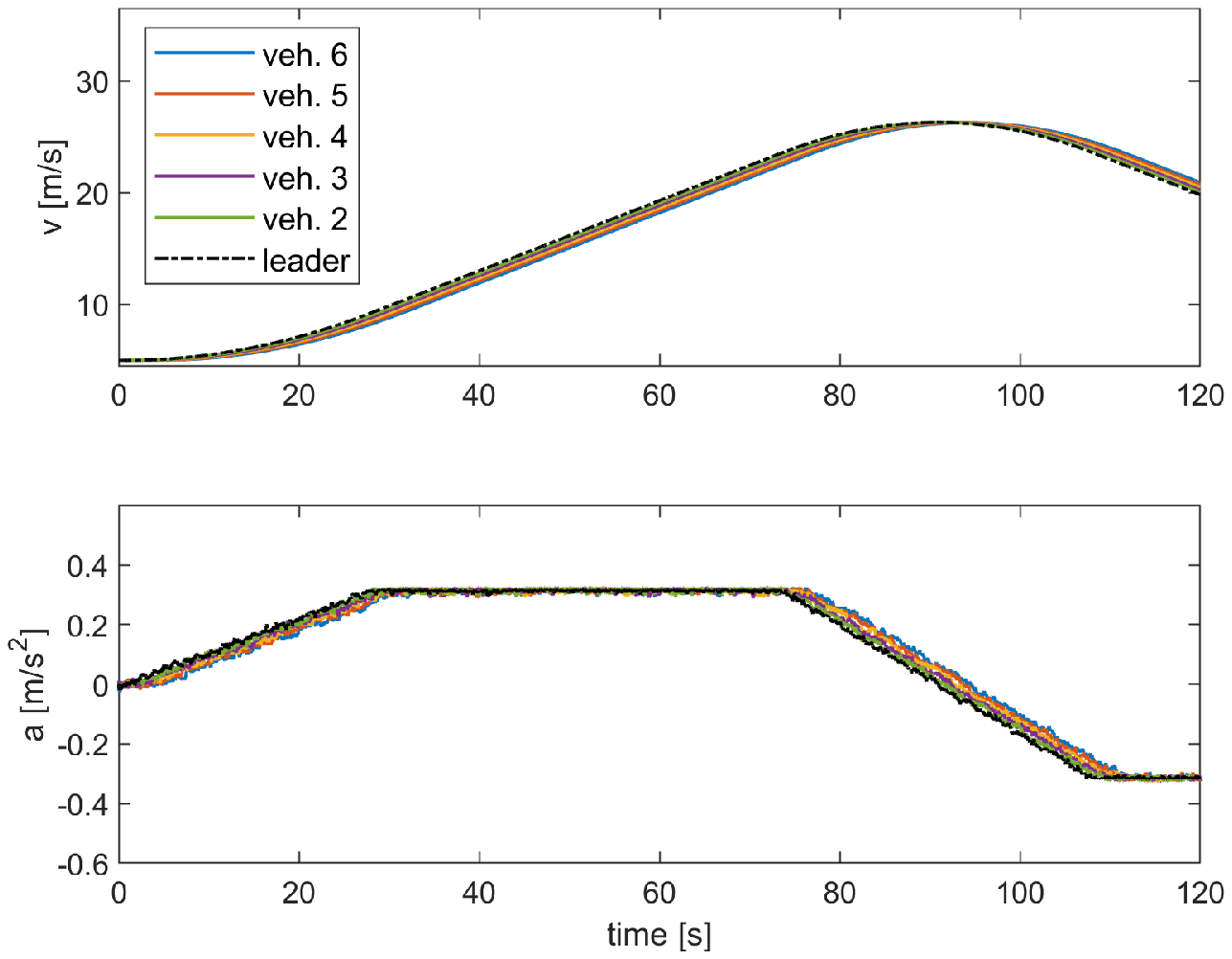}
  \caption{Velocity (upper plot) and acceleration (lower plot) for all \newline vehicles. The platoon follows the leader.} 
  \label{vel_sat_prop}
\end{subfigure}
    ~ 
\begin{subfigure}{0.5\textwidth}
  \centering
  \includegraphics[width=0.99\textwidth]{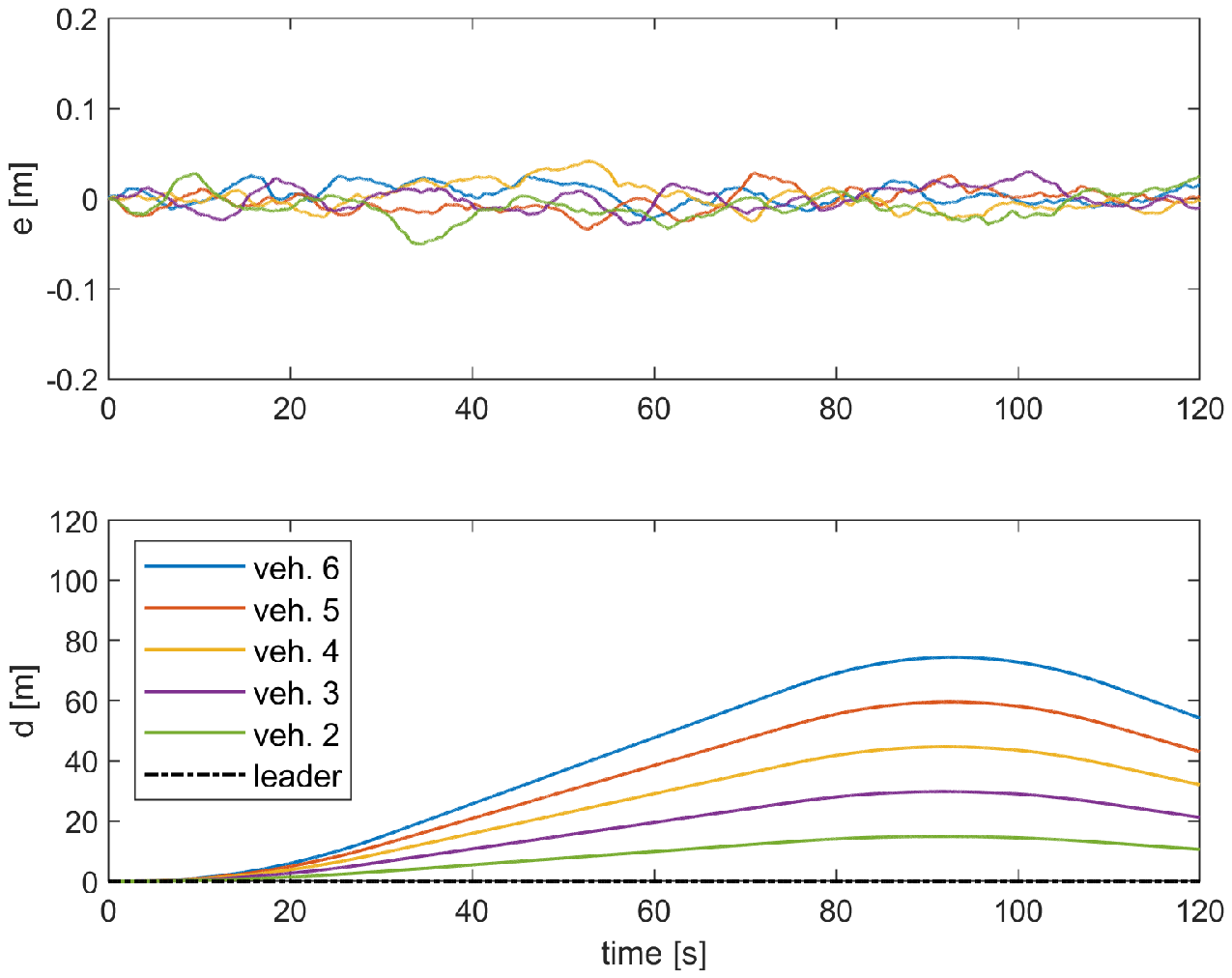}
  \caption{Spacing error (upper plot) and distance to leading vehicle (lower plot). The spacing error is regulated despite the {\color{black}acceleration} limits.}
  \label{error_sat_prop}
\end{subfigure}
\caption{{\color{black}Acceleration} limits, proposed resilient strategy.}
\label{TOT_sat_prop}
\end{figure}

\begin{figure}[t]
      \centering
			\begin{subfigure}{0.5\textwidth}
      \centering
  \includegraphics[width=0.99\textwidth]{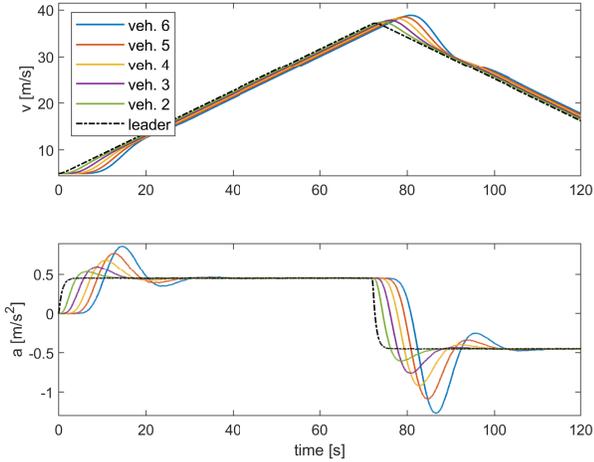}
  \caption{Velocity (upper plot) and acceleration (lower plot) for all \newline vehicles. Note the large velocity and acceleration transients.}
  \label{vel_comm}
\end{subfigure}
    ~ 
\begin{subfigure}{0.5\textwidth}
  \centering
  \includegraphics[width=0.99\textwidth]{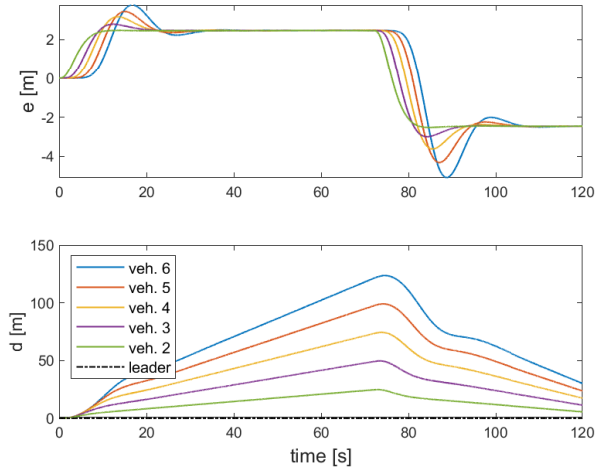}
  \caption{Spacing error (upper plot) and distance to leading vehicle \newline (lower plot). The spacing error is not regulated close to zero.}
  \label{error_comm}
\end{subfigure}
\caption{Communication failures, without resilient strategy.}
\label{TOT_comm}
\end{figure}

\begin{figure}[t]
      \centering
			\begin{subfigure}{0.5\textwidth}
      \centering
  \includegraphics[width=0.99\textwidth]{}
  \caption{Velocity (upper plot) and acceleration (lower plot) for all \newline vehicles, with observer estimates in red in the small box.} 
  \label{vel_comm_prop}
\end{subfigure}
    ~ 
\begin{subfigure}{0.5\textwidth}
  \centering
  \includegraphics[width=0.99\textwidth]{}
  \caption{Spacing error (upper plot) and distance to leading vehicle \newline (lower plot). Close-to-ideal performance is recovered.} 
  \label{error_comm_prop}
\end{subfigure}
\caption{Communication failures, proposed resilient strategy.}
\label{TOT_comm_prop}
\end{figure}

With the same acceleration limits of Tab.~\ref{trueP2}, Fig.~\ref{vel_sat_prop} demonstrates the effectiveness of the proposed resilient strategy against acceleration limits:  
all vehicles manage to follow the velocity and acceleration of the leader. Not surprizingly, this is done at the expense of lower velocity: it was already shown in \cite{751767,1383798} that reducing performance is necessary to keep the formation, since it avoids that the vehicles hit their acceleration limits. Fig.~\ref{error_sat_prop} shows that the spacing error is regulated close to zero (perfect regulation to zero is not possible due to noises). 
Therefore, it is confirmed that self-organization to a constrained group model promotes resilience to acceleration limits, 
since even the vehicle with the worst acceleration limits can follow the behavior of the leader.

To illustrate the adverse effects of losing communication, let us remove acceleration limits for simplicity. 
Also, to make the effect of losing $u_{bl_{i-1}}$ as in \eqref{cont_hom_nocomm} more evident, we make the leader accelerate/decelerate with sharp piecewise constant phases. Fig.~\ref{vel_comm} shows that communication failures result in poor vehicle-following capabilities, with transients in acceleration and velocity, and with spacing errors that cannot converge and grow in the order of 5 meters (Fig.~\ref{error_comm}).

\begin{figure*}[t]
      \centering
\begin{subfigure}{0.48\textwidth}
      \centering
      \includegraphics[width=0.99\textwidth]{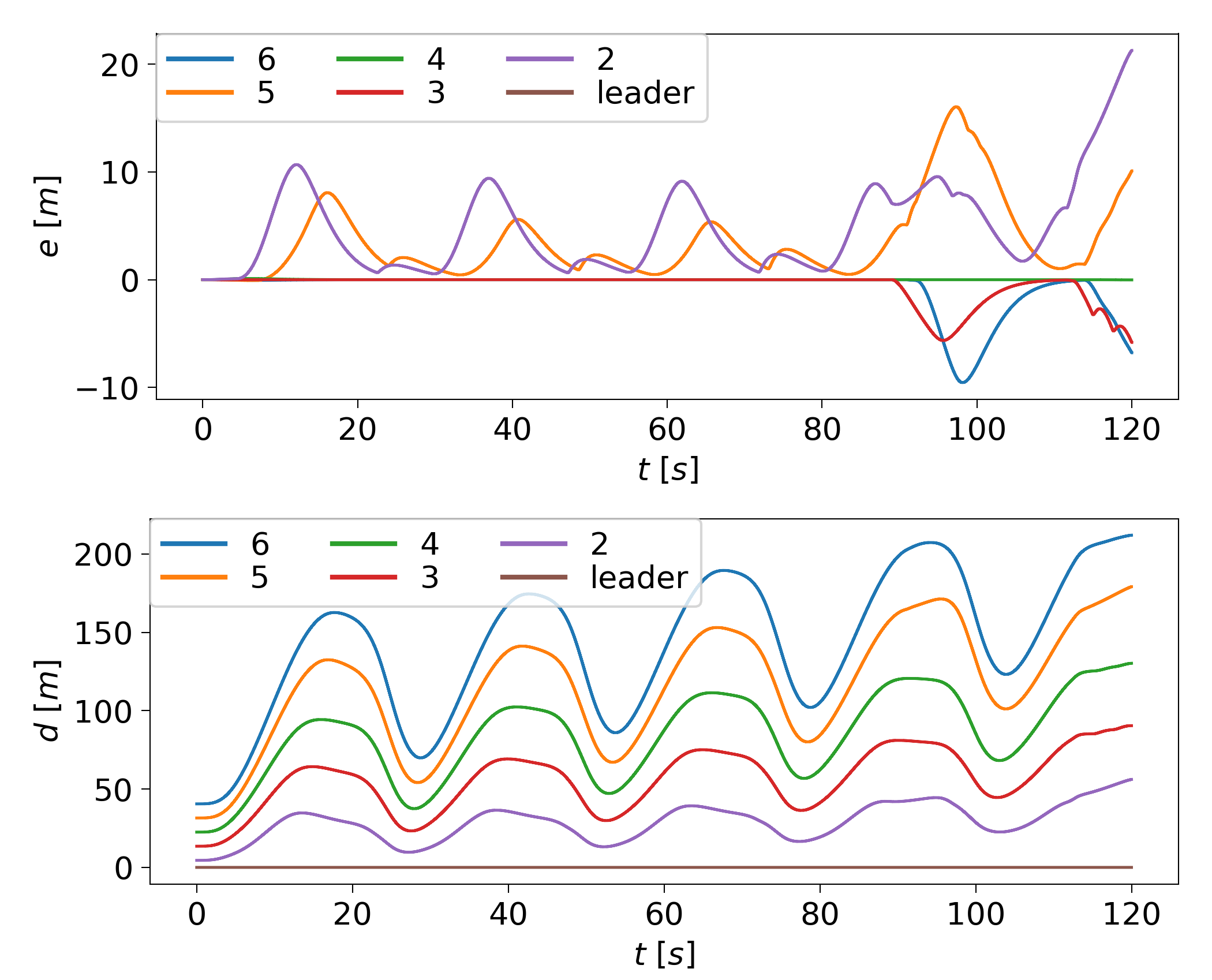}
  \caption{Spacing error and distance to leader without self-organization.}
  \label{HET_OFF}
\end{subfigure}
    ~ 
\begin{subfigure}{0.48\textwidth}
  \centering
      \includegraphics[width=0.99\textwidth]{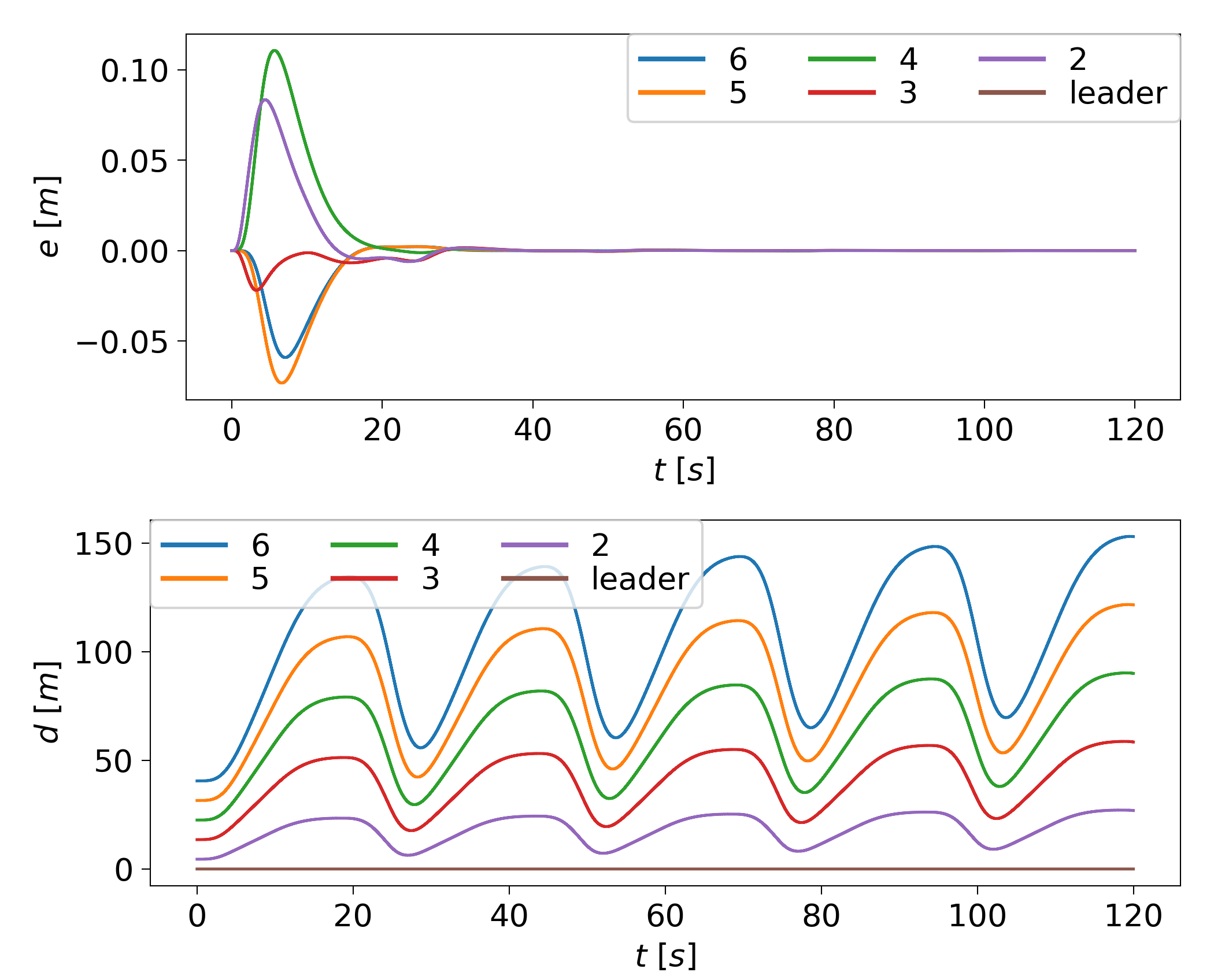}
  \caption{Spacing error and distance to leader with self-organization.}
  \label{HET_ON}
\end{subfigure}
\begin{subfigure}{0.48\textwidth}
  \centering
      \includegraphics[width=0.99\textwidth]{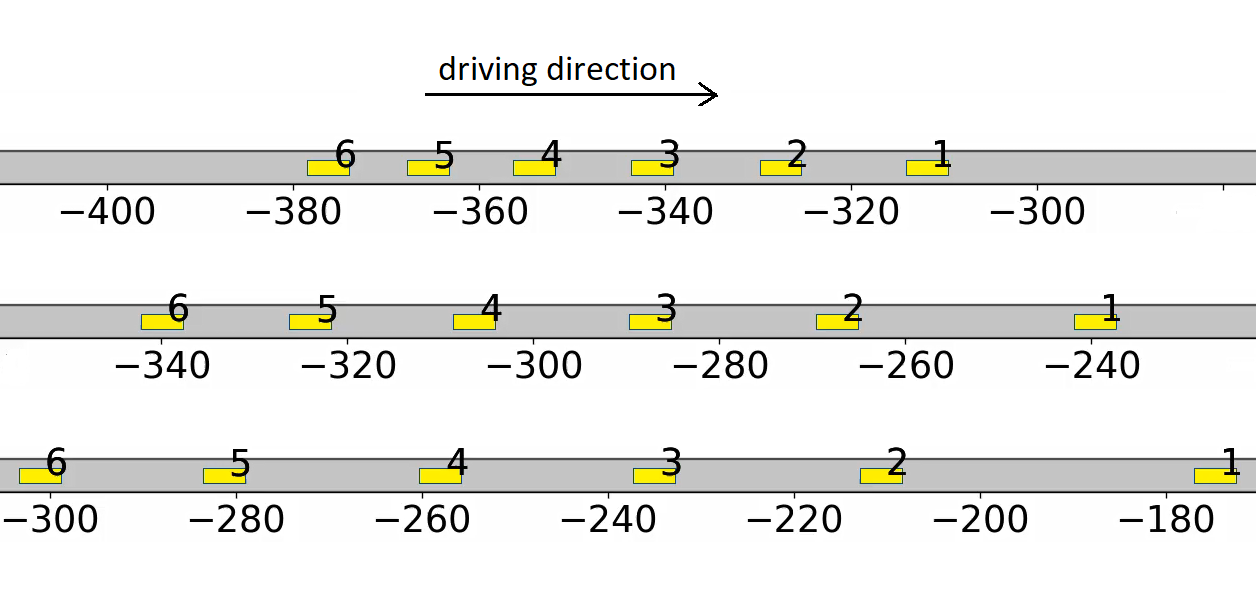}
  \caption{{\color{black}Snapshots from} CommonRoad without self-organization.}\vspace{-0.15cm}
  \label{HET_OFF_video}
	\end{subfigure}
    ~ 
\begin{subfigure}{0.48\textwidth}
  \centering
      \includegraphics[width=0.99\textwidth]{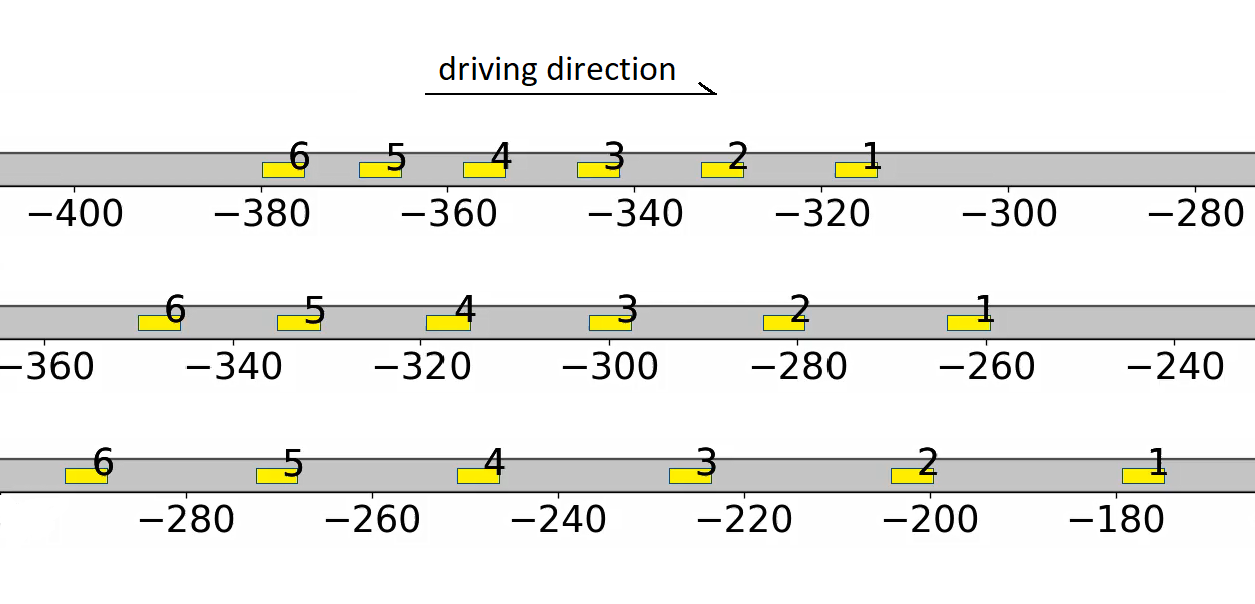}
  \caption{{\color{black}Snapshots from} CommonRoad with self-organization.}\vspace{-0.15cm}
  \label{HET_ON_video}
	\end{subfigure}
\caption{{\color{black}Spacing error, distance to leading vehicle, and snapshots from CommonRoad, with and without self-organizing strategy, in the presence of heterogeneous {\color{black}acceleration} limits. Our provably-correct safety layer is included. Self-organization allows the spacing errors to converge to zero.}}
\label{TOT_HOM_HET}
\end{figure*}

\begin{figure*}[t]
      \centering
\begin{subfigure}{0.48\textwidth}
      \centering
      \includegraphics[width=0.99\textwidth]{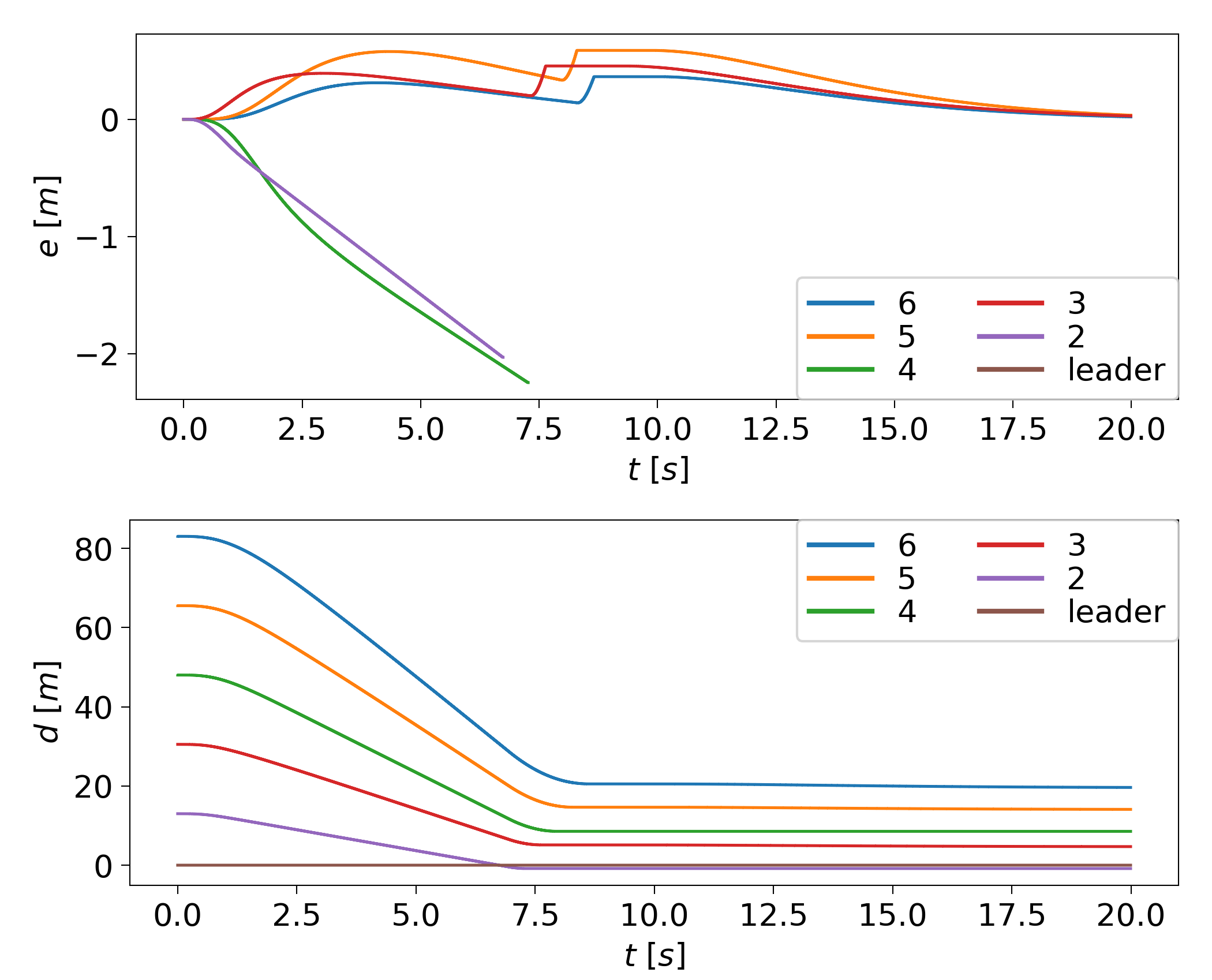}
  \caption{Emergency braking without safety layer.}
  \label{CRASH_OFF}
\end{subfigure}
    ~ 
\begin{subfigure}{0.48\textwidth}
  \centering
      \includegraphics[width=0.99\textwidth]{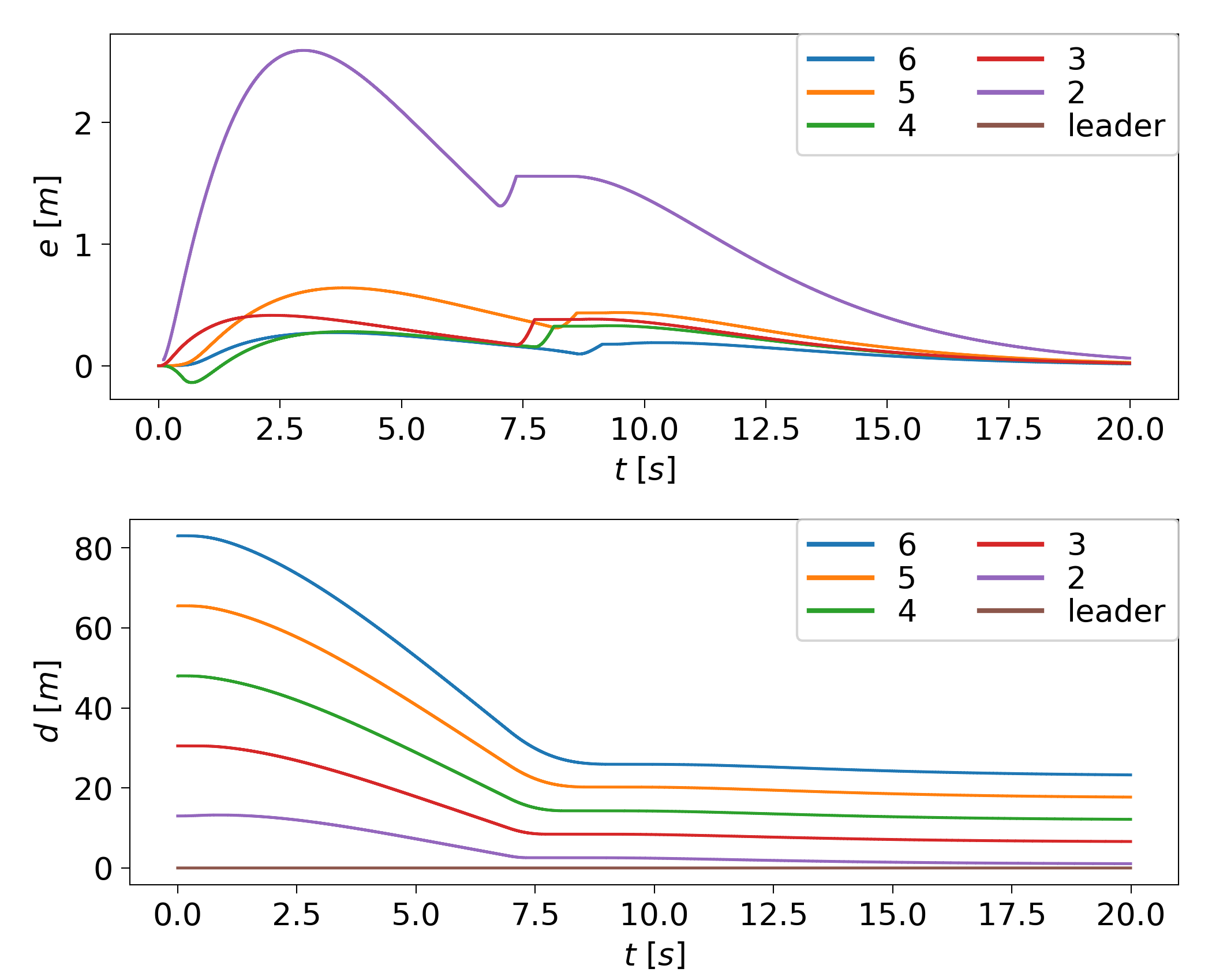}
  \caption{Emergency braking with safety layer.}
  \label{CRASH_ON}
	\end{subfigure}
        ~ 
\begin{subfigure}{0.48\textwidth}
  \centering
      \includegraphics[width=0.99\textwidth,trim=0cm 0cm 0cm 0.2cm,clip]{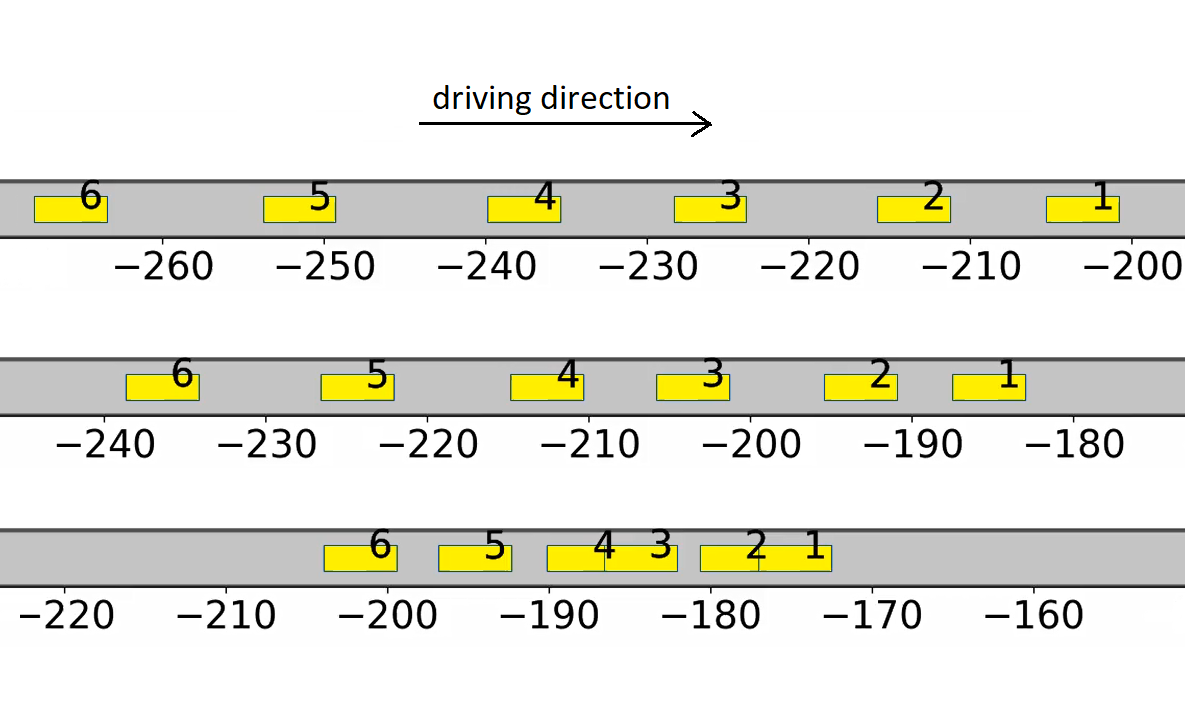}\vspace{-0.35cm}
  \caption{{\color{black}Snapshots from} CommonRoad without safety layer.}
  \label{CRASH_OFF_video}
	\end{subfigure}
\begin{subfigure}{0.48\textwidth}
	\centering
	\includegraphics[width=0.99\textwidth,trim=0cm 0cm 0cm 0.2cm,clip]{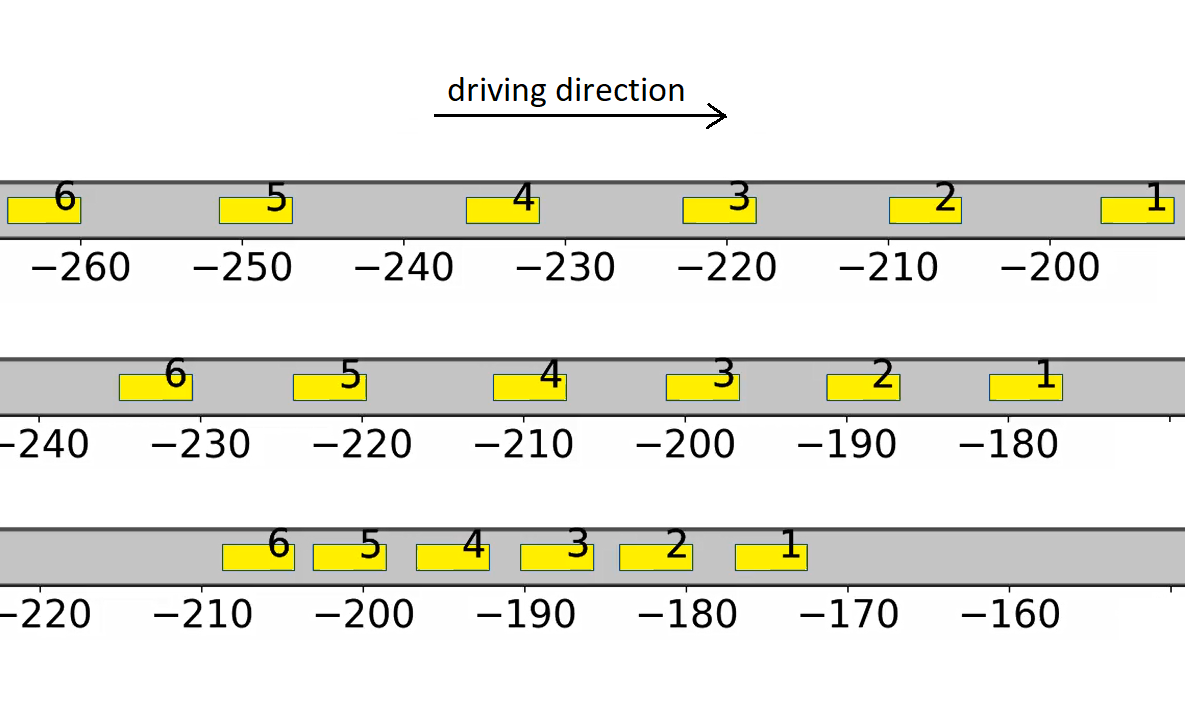}\vspace{-0.35cm}
	\caption{{\color{black}Snapshots from} CommonRoad with safety layer.}
	\label{CRASH_ON_video}
\end{subfigure}
~ 
\caption{{\color{black}Spacing error, distance to leading vehicle, and snapshots from CommonRoad, with and without safety layer, for an emergency braking scenario.  When present, the provably-correct safety layer avoids the crash. The spacing policy is  small on purpose, to test safety in challenging scenarios.}}
\label{TOT_CRASH}
\end{figure*}

To demonstrate the benefits of the proposed resilient strategy \eqref{LTI_platoon_highgain}-\eqref{LTI_platoon_unkinput5}, we apply 
the proposed observer. The observer design parameters are $S_a=S_j=1$, $\epsilon = 0.01$, $\eta = 1.5$, $Q_o = \diag\left(\left[0.1,0.2,0.01\right]\right)$, $\alpha_0 = 3$, $\alpha_1 = 0.2$, $\alpha_2 = 0.01$, 
and $L_o$ selected to place the poles of $A_o-L_oC_o$ in $-5$ and $-1.5\pm j0.5$. For the leader behavior as before, Fig.~\ref{vel_comm_prop} shows that all vehicles manage to follow the velocity and acceleration of the leader (the red lines represent the observed variables from the observer). 
It can be noticed that the observer does not amplify the measurement noise. 
Remarkably, Fig.~\ref{error_comm_prop} shows that the spacing errors are kept close to zero, except for a small observer transient when the leader switches from acceleration to deceleration. Note that the transients in Fig.~\ref{error_comm_prop} are much smaller than those in Fig.~\ref{error_comm}, making the performance of the proposed resilient design in Fig.~\ref{TOT_comm_prop} clearly superior to that of Fig.~\ref{TOT_comm}.   

\subsection{Full-range safe-CACC evaluation with CommonRoad}

{\color{black}We evaluate the provably-safe control layer mentioned in Sec.~II-B via CommonRoad \cite{7995802}. CommonRoad is a collection of composable benchmarks for motion planning on roads, which provides a means of evaluating and comparing different control strategies for automated vehicles.}  
To evaluate the impact of self-organization, we perform tests with and without the self-organization mechanism: in these tests, the safety layer  is wrapped around the nominal platooning strategy, as discussed in Sec.~II-B. 
We test the presence of heterogeneous acceleration limits for a leader behavior resembling a stop-and-go behavior: the spacing errors  in Fig.~\ref{HET_OFF} show that the absence of self-organization causes spacing errors in the order of 10 meters, indicating poor platooning performance. Upon activating the proposed self-organization strategy, Fig.~\ref{HET_ON} shows that the spacing errors are small and converge close to zero. Snapshots from CommonRoad are shown in Fig.~\ref{HET_OFF_video} (without  self-organization) and Fig.~\ref{HET_ON_video} (with self-organization): note that the proposed self-organization strategy promotes a homogeneous behavior for all vehicles during the whole platooning operation, despite the heterogeneity in the vehicle dynamics. 

To test the impact of the safety layer, we have created on purpose a challenging emergency-braking scenario: the scenario is challenging because the desired inter-vehicle spacing is set to be small ($h=0.3$). 
By doing this, strong braking of the leading vehicle would activate the safety layer. Fig.~\ref{CRASH_OFF} shows that the absence of a safety layer can indeed lead to a crash, cf. the snapshots from CommonRoad in Fig.~\ref{CRASH_OFF_video}. However, when the safety layer is active, it promptly activates emergency braking to avoid a collision, as shown 
in Fig.~\ref{CRASH_ON_video}. 

\subsection{Numerical experiments with CARLA}

{\color{black}To further evaluate our work, we make use of the toolbox OpenCDA \cite{xu2021opencda} in the CARLA simulator (cf. Fig.~\ref{11b}). 
We consider a platoon of one leading vehicle and four following vehicles, where the type of vehicles are: Lincoln MKZ2017, Dodge Charger2020, BMW Grand Tourer, Ford Mustang, Nissan Patrol.} Upon estimating approximately the engine time constant of each vehicle, we implemented the homogenizing strategy in Sec.~III, so as to compare the difference between the absence and the presence of a common group model. In order to test the strategy in different conditions, we considered three different behaviors for the leading vehicle:

\begin{figure*}[t]
	\centering
	\begin{subfigure}{0.49\textwidth}
		\centering
		\includegraphics[width=0.99\textwidth]{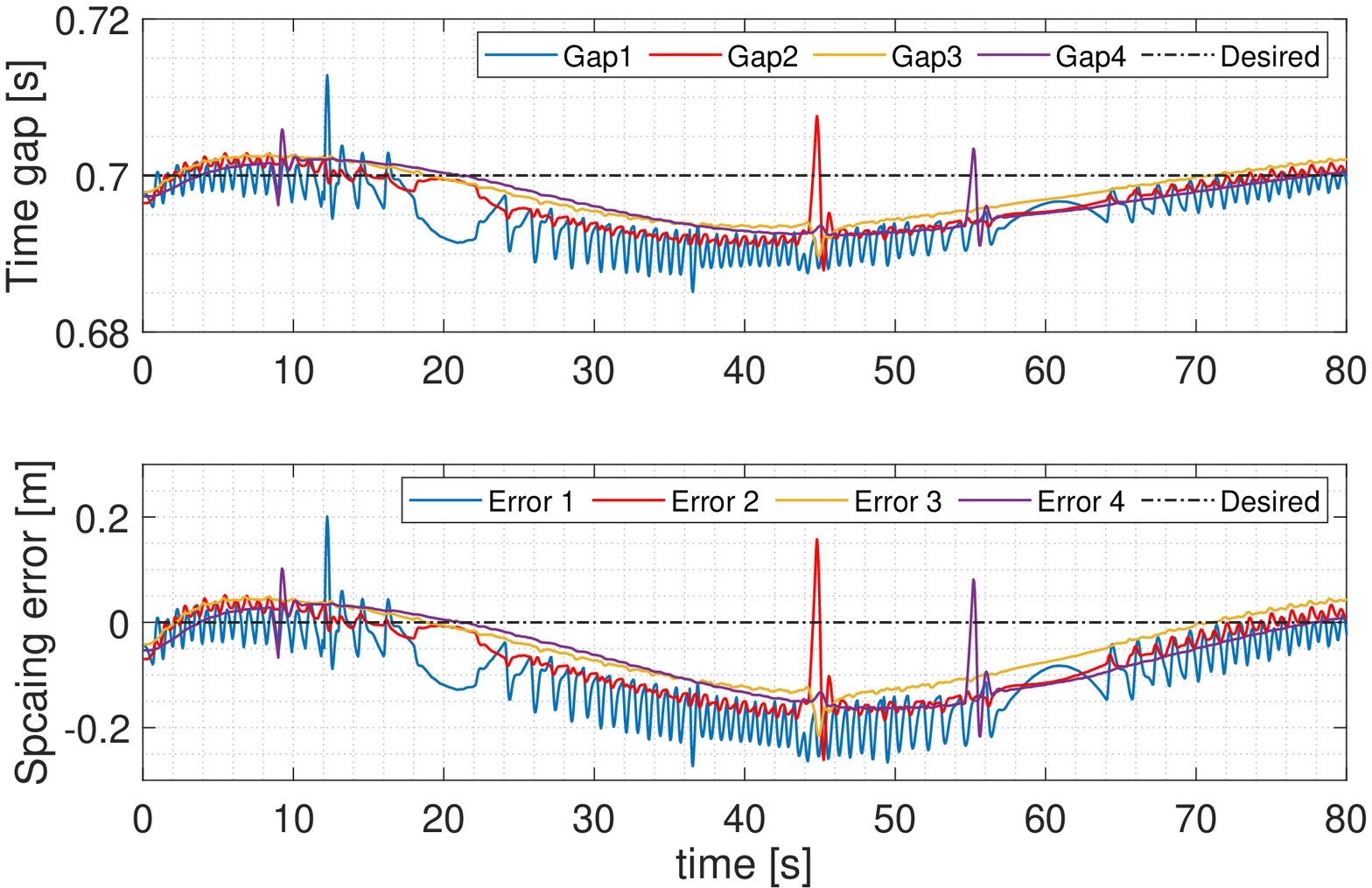}
		\caption{Time gap (upper plot) and spacing error (lower plot) \newline for all vehicles,  without group model.}
		\label{CARLAvel1}
	\end{subfigure}
	~ 
	\begin{subfigure}{0.49\textwidth}
		\centering
		\includegraphics[width=0.99\textwidth]{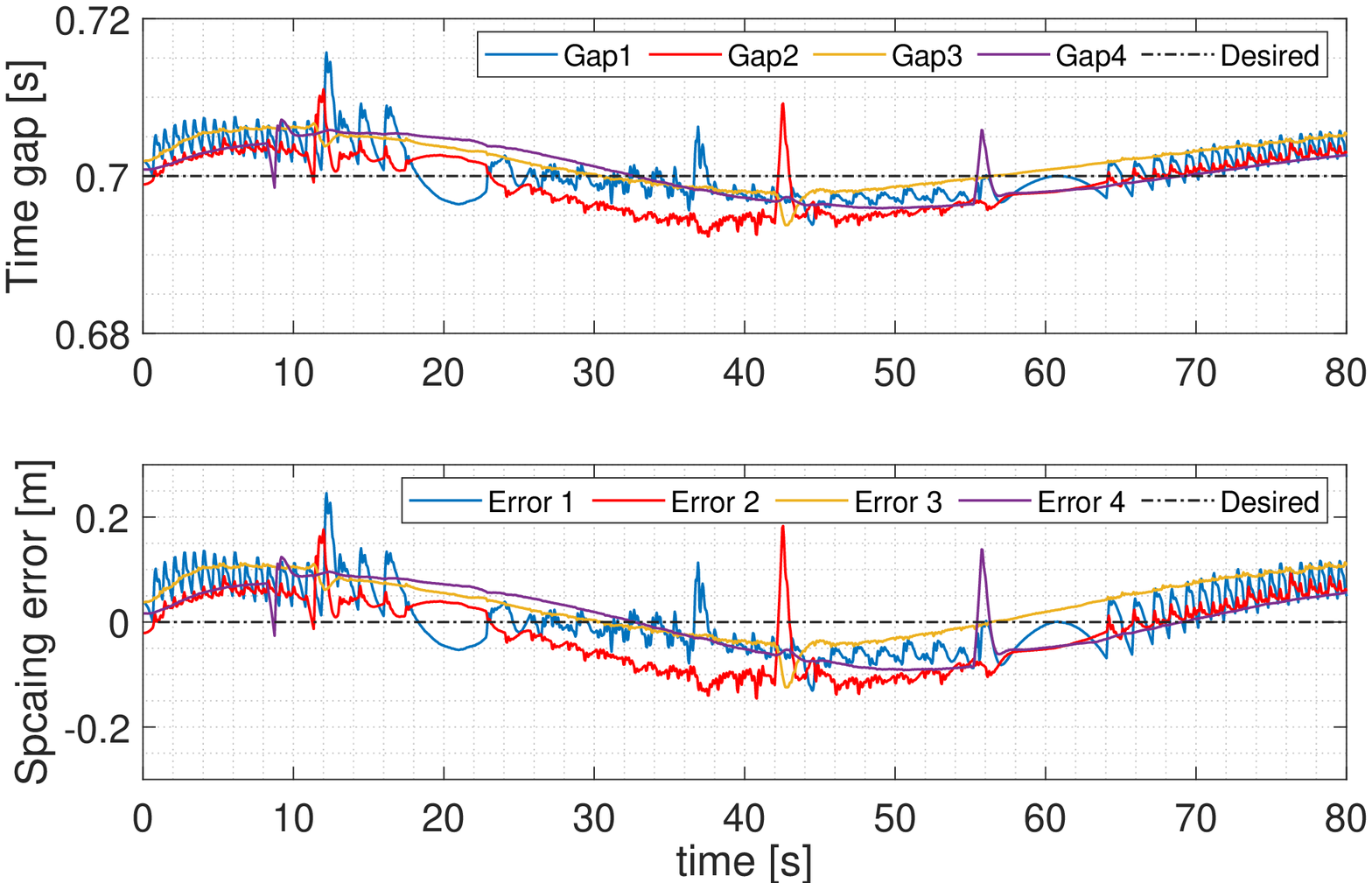}
		\caption{Time gap (upper plot) and spacing error (lower plot) \newline for all vehicles, with group model.}
		\label{CARLAvel2}
	\end{subfigure}
	\caption{Results with CARLA simulator for Scenario 1.}
	\label{TOT_CARLAvel}
\end{figure*}

\begin{figure}[t]
	\centering
	\vspace{-0.5cm}\includegraphics[width=0.48\textwidth]{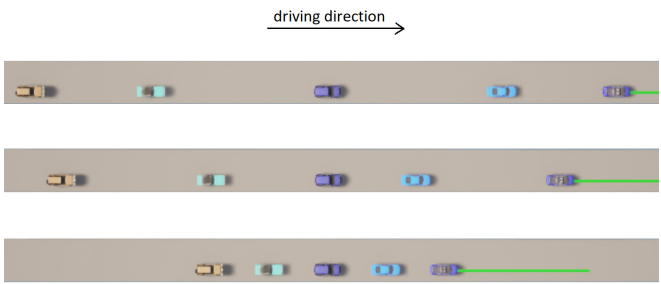}\vspace{-0.2cm}
	\caption{Snapshots from OpenCDA in CARLA.}
	\label{11b}
\end{figure}

\begin{table*}[b]
	\captionsetup{width=.99\textwidth}
	\caption{\text{Results of numerical experiments in CARLA. 
	The time gap error is evaluated between adjacent vehicles.}}
	\label{CARLAtab}
	\centering
	{\footnotesize
		\begin{tabular}{llllllllll}
			\toprule
			&$\quad$& \multicolumn{2}{c}{Scenario 1} &$\quad$& \multicolumn{2}{c}{Scenario 2} &$\quad$& \multicolumn{2}{c}{Scenario 3}\\
			&$\quad$& no group model & with group model &$\quad$& no group model & with group model &$\quad$& no group model & with group model \\
			\midrule
			 time gap error 1 &$\quad$& 0.0946 & \textbf{0.0578} $\ $(-63.7\%) &$\quad$& 0.1060 & \textbf{0.0726} $\ $(-46.0\%) &$\quad$& 0.1365  & \textbf{0.1076} $\ $(-26.9\%)\\
			 time gap error 2 &$\quad$& 0.0723 & \textbf{0.0597} $\ $(-21.1\%) &$\quad$& 0.0851 & \textbf{0.0739} $\ $(-15.2\%) &$\quad$& 0.1155 & \textbf{0.1097} $\ $$\ $(-5.3\%)\\
			 time gap error 3 &$\quad$& 0.0613 & \textbf{0.0587} $\ $$\ $(-4.4\%) &$\quad$& 0.0753 & \textbf{0.0672} $\ $(-12.1\%) &$\quad$& 0.1136 & \textbf{0.1005} $\ $(-13.0\%) \\
			 time gap error 4 &$\quad$& 0.0664 & \textbf{0.0539} $\ $(-23.2\%) &$\quad$& 0.0734 & \textbf{0.0650} $\ $(-12.9\%) &$\quad$& 0.1032  & \textbf{0.0900} $\ $(-14.7\%)\\
			\bottomrule
		\end{tabular}
	}
\end{table*}

\begin{itemize}
	\item Scenario 1: the velocity of the leading vehicle oscillates around 15m/s with amplitude 6m/s and period 70s;
	\item Scenario 2: the velocity of the leading vehicle oscillates around 15m/s with amplitude 6m/s and period 50s;
	\item Scenario 3: the velocity of the leading vehicle oscillates around 15m/s with amplitude 6m/s and period 30s.
\end{itemize}
The results are reported in Tab.~\ref{CARLAtab} in terms of the time gap error between adjacent vehicles (the difference between the desired time gap and the actual time gap). The results are also reported in Fig.~\ref{TOT_CARLAvel} in terms of time gap and spacing error (for compactness, only the results for Scenario 1 are reported). From the plots it is possible to see that the main difference between Fig.~\ref{CARLAvel1} (without common group model) and Fig.~\ref{CARLAvel2} (with common group model) is that the presence of a common group model attenuates the oscillations around the desired spacing. 
This shows the effectiveness of enforcing a common behavior to improve platooning performance.

\subsection{Numerical experiments in complex traffic scenarios}

{\color{black}We finally use CommonRoad to test complex traffic scenarios beyond the standard platooning operation: 
\begin{itemize}
\item a merging scenario with three different platoons merging into one;
\item a cut-in scenario with one vehicle suddenly cutting in the middle of the platoon.
\end{itemize}}
These scenarios are tested for the proposed self-organization mechanism and the provably-correct safety layer wrapped around it. The merging scenario has two phases, as shown in Fig.~\ref{MERGE} and Fig.~\ref{MERGE_video}: first the two platoons on the left will merge, and then the vehicles will all accelerate to merge with the platoon on the right. The maneuver is accomplished successfully. In the cut-in scenario, the velocity and spacing errors are in Fig.~\ref{CUT}: the red vehicle shown in Fig.~\ref{CUT_video} suddenly cuts in the middle of the platoon violating the safety distance. As a result, the safety layer is activated, in particular the safety recapturing controller that aims to recover the safety distance in finite time: this explains the large spacing error in Fig.~\ref{CUT}, which is needed to create the gap for the cut-in vehicle to safely merge.  
Summarizing, the tests in Figs.~\ref{TOT_COMPLEX} and \ref{TOT_COMPLEX2} suggest that the proposed self-organization mechanism can be embedded into a full-range safe CACC protocol. 

\begin{figure}[t]
	\centering
	\begin{subfigure}{0.5\textwidth}
		\centering
		\includegraphics[width=0.99\textwidth]{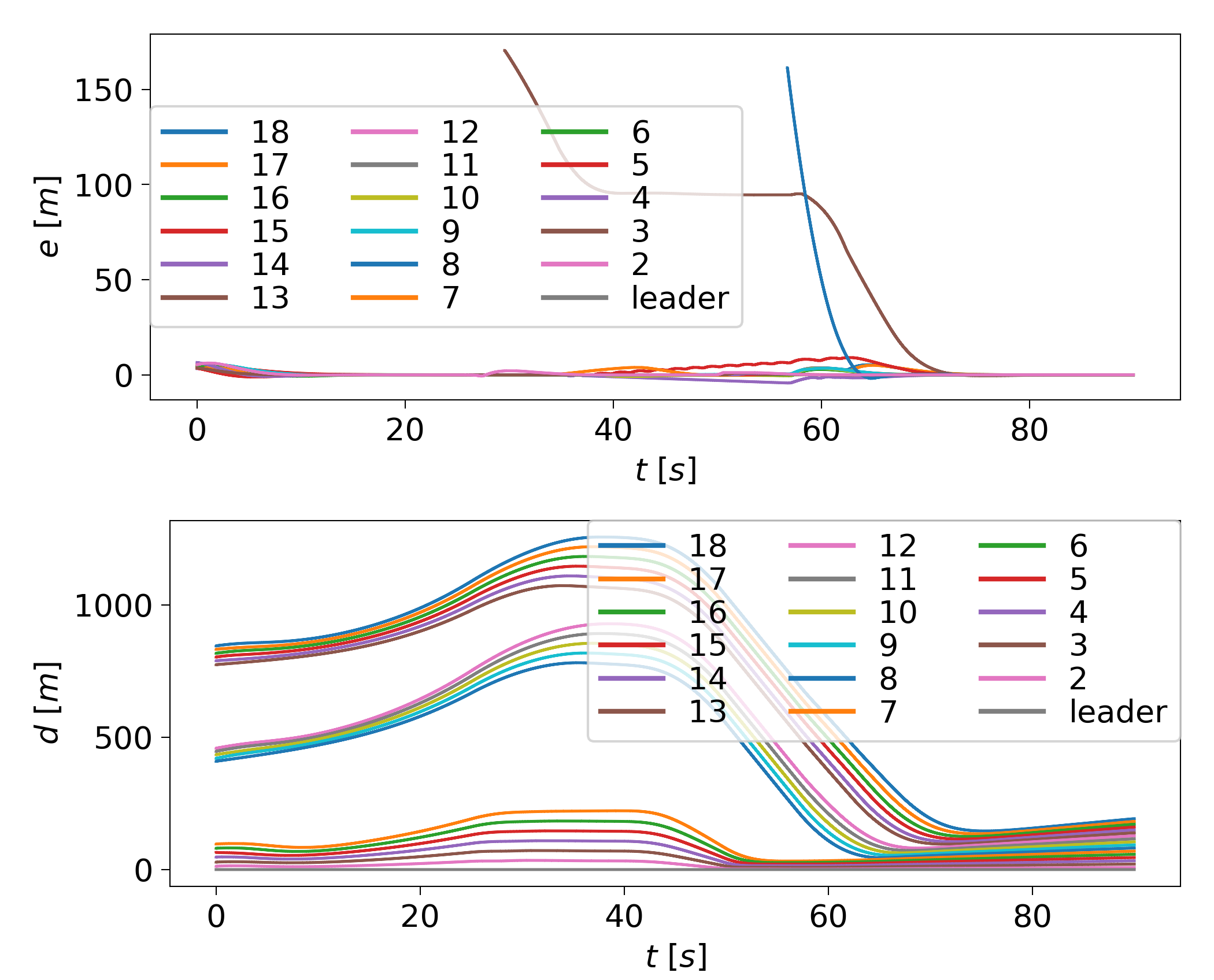}
		\caption{Spacing error and distance to leading vehicle.}
		\label{MERGE}
	\end{subfigure}
	~ 
	\begin{subfigure}{0.5\textwidth}
		\centering
		\vspace{0.3cm}
		\includegraphics[width=0.99\textwidth]{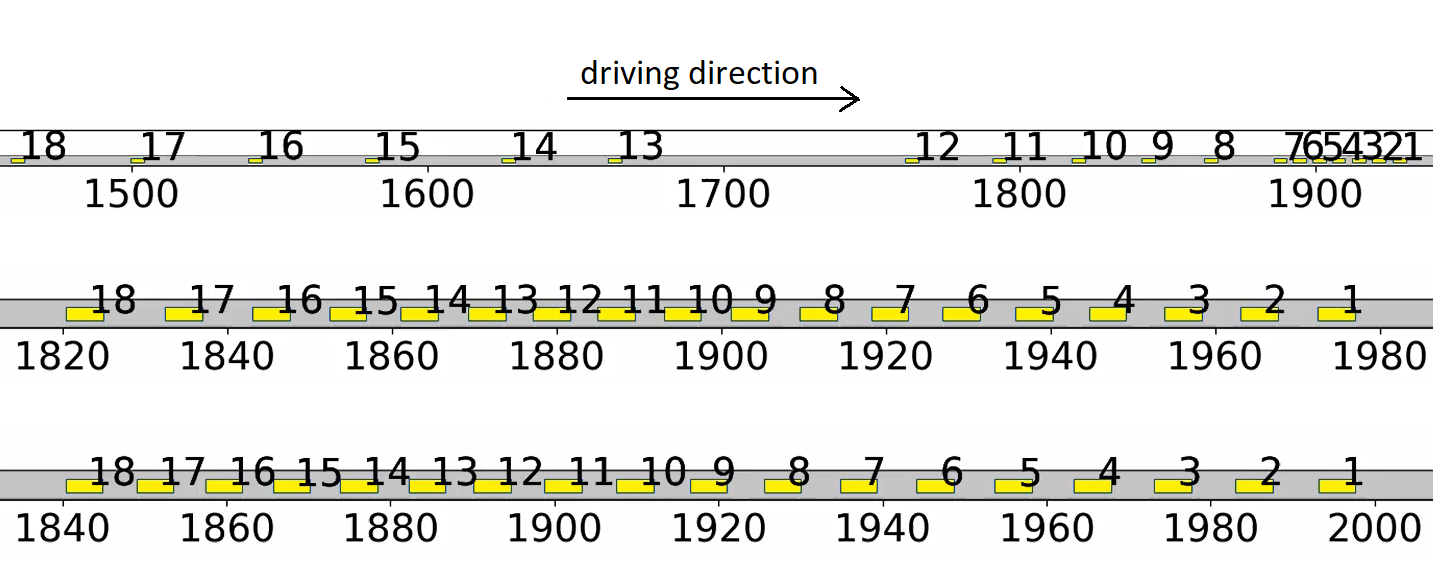}\vspace{0.3cm}		
		\caption{Snapshots from CommonRoad {\color{black}(vehicles appear with different size due to the different length of the road in each snapshot)}.}
		\label{MERGE_video}
	\end{subfigure}
	\caption{{\color{black}Spacing error, distance to leading vehicle, and snapshots from CommonRoad, for a merging scenario. The proposed self-organizing strategy and the safety layer are both included. The scenario is accomplished successfully.}}
	\label{TOT_COMPLEX}
\end{figure}

\begin{figure}[t]
	\centering
	\begin{subfigure}{0.5\textwidth}
		\centering
		\includegraphics[width=0.99\textwidth]{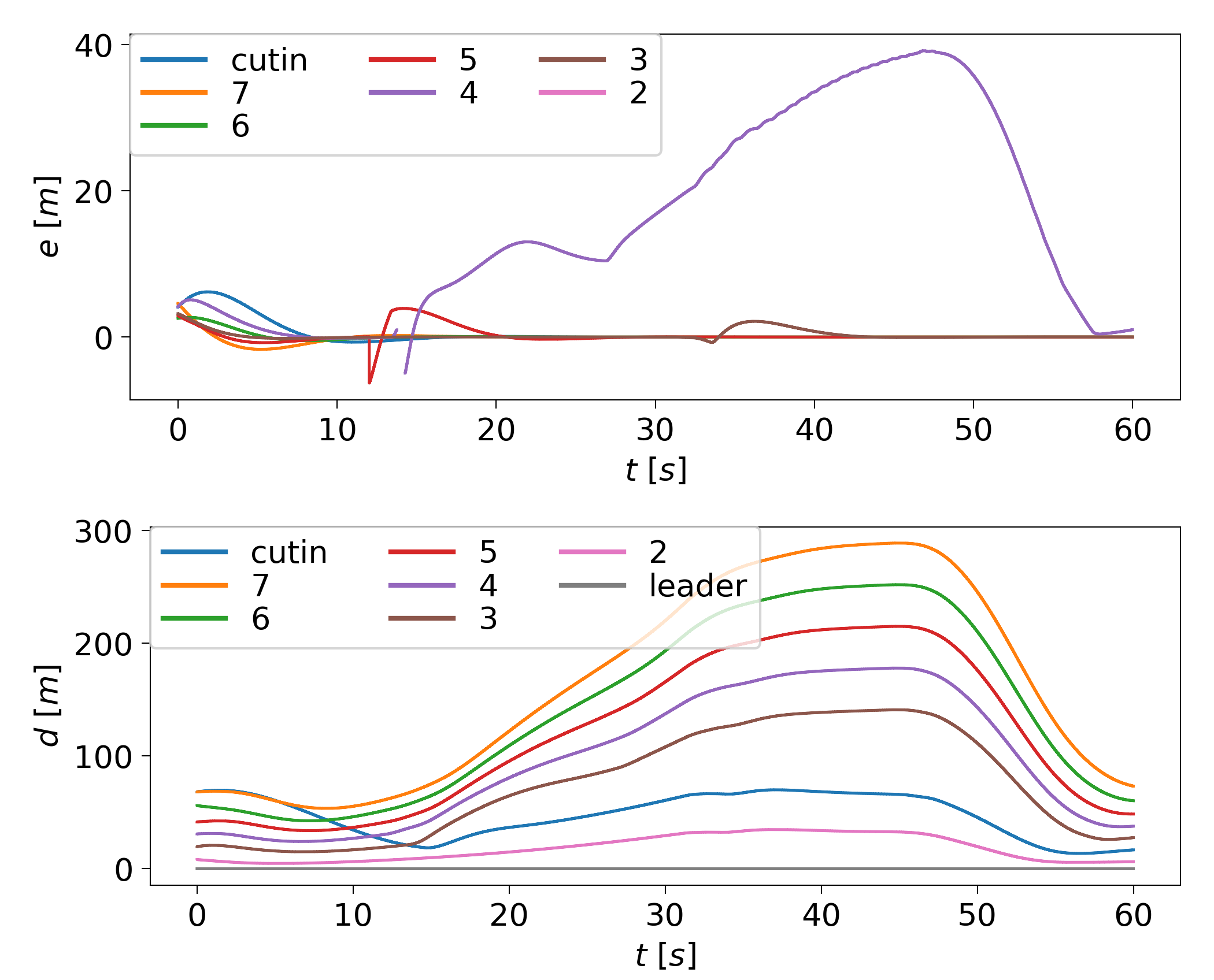}
		\caption{Spacing error and distance to leading vehicle.}
		\label{CUT}
	\end{subfigure}
	~ 
	\begin{subfigure}{0.5\textwidth}
		\centering
		\includegraphics[width=0.99\textwidth]{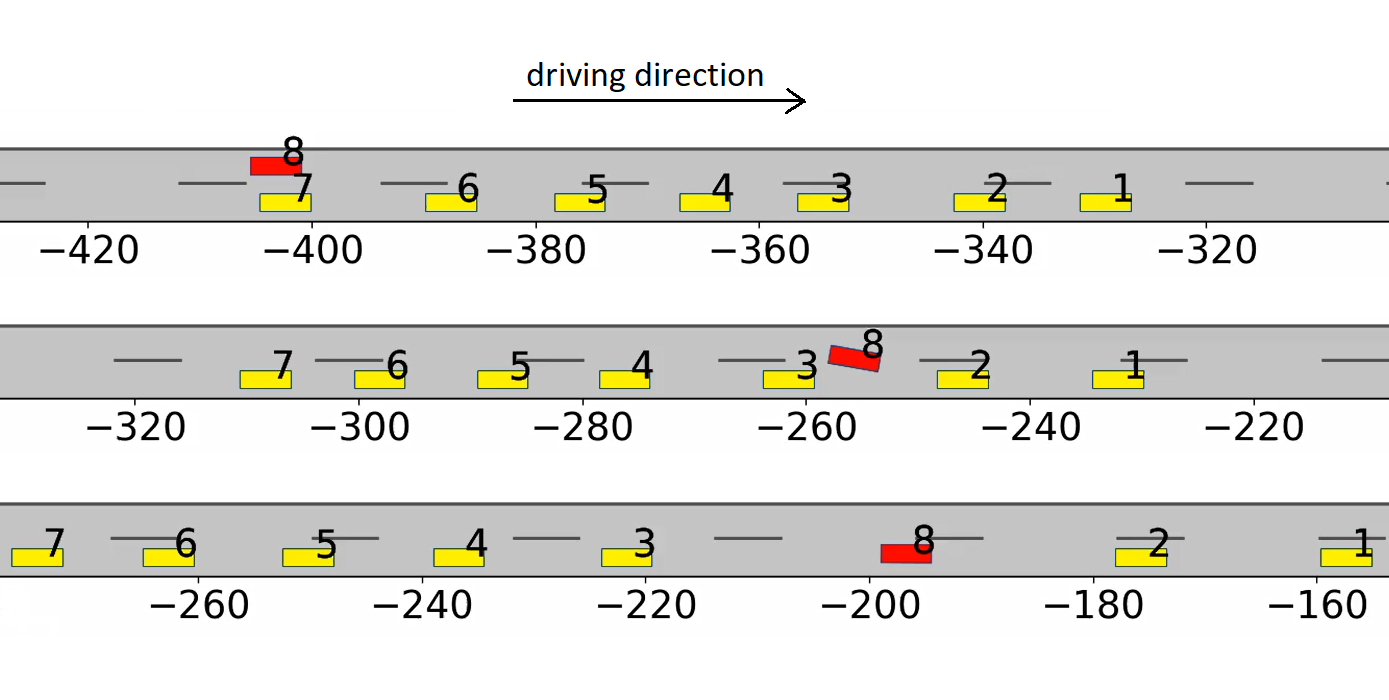}
		\caption{Snapshots from CommonRoad.}
		\label{CUT_video}
	\end{subfigure}
	\caption{{\color{black}Spacing error, distance to leading vehicle, and snapshots from CommonRoad, for a cut-in scenario. The proposed self-organizing strategy and the safety layer are both included. The scenario is accomplished successfully.}}
	\label{TOT_COMPLEX2}
\end{figure}

\section{Conclusions and future work}\label{Chap_conclusion}

This work studied self-organization strategies in formations of cooperative vehicles and their provably-correct safety implementation. Self-organization allows all vehicles to homogenize to a common (group) behavior, which has been shown to promote resilience against {\color{black}acceleration} limits and wireless communication failures. 
Stability of the self-organization mechanism has been studied analytically in terms of convergence of the spacing errors to zero and convergence to a common homogeneous behavior, despite the vehicles comprising the platoon being heterogeneous. The theoretical results have been supported by numerical implementations in the platooning toolbox OpenCDA in CARLA and in the CommonRoad platform. The proposed self-organization approach was integrated in a provably-correct layer to ensure safety. 

{\color{black}Interesting topics for future work are to handle nonlinear vehicle dynamics by possibly combining self-organization with nonlinear control (e.g., backstepping), and handle uncertain vehicle dynamics by possibly combining self-organization with adaptive control \cite{10130071}.} Also, stemming from the observation that platoons require some homogeneity to operate effectively \cite{7466806}, self-organization can be useful to determine metrics for vehicles leaving or joining a platoon. 



\ifCLASSOPTIONcaptionsoff
  \newpage
\fi

\bibliographystyle{IEEEtran}
\bibliography{SelforgCACC_final}

\begin{thebibliography}{10}
\providecommand{\url}[1]{#1}
\csname url@samestyle\endcsname
\providecommand{\newblock}{\relax}
\providecommand{\bibinfo}[2]{#2}
\providecommand{\BIBentrySTDinterwordspacing}{\spaceskip=0pt\relax}
\providecommand{\BIBentryALTinterwordstretchfactor}{4}
\providecommand{\BIBentryALTinterwordspacing}{\spaceskip=\fontdimen2\font plus
\BIBentryALTinterwordstretchfactor\fontdimen3\font minus
  \fontdimen4\font\relax}
\providecommand{\BIBforeignlanguage}[2]{{%
\expandafter\ifx\csname l@#1\endcsname\relax
\typeout{** WARNING: IEEEtran.bst: No hyphenation pattern has been}%
\typeout{** loaded for the language `#1'. Using the pattern for}%
\typeout{** the default language instead.}%
\else
\language=\csname l@#1\endcsname
\fi
#2}}
\providecommand{\BIBdecl}{\relax}
\BIBdecl

\bibitem{6338839}
R.~{Kianfar}, P.~{Falcone}, and J.~{Fredriksson}, ``Reachability analysis of
  cooperative adaptive cruise controller,'' \emph{Proc. of the 15th IEEE
  International Conference on Intelligent Transportation Systems}, pp.
  1537--1542, 2012.

\bibitem{1100652}
L.~{Peppard}, ``String stability of relative-motion {PID} vehicle control
  systems,'' \emph{IEEE Transactions on Automatic Control}, vol.~19, no.~5, pp.
  579--581, 1974.

\bibitem{8671766}
E.~{van Nunen}, J.~{Reinders}, E.~{Semsar-Kazerooni}, and N.~{van de Wouw},
  ``String stable model predictive cooperative adaptive cruise control for
  heterogeneous platoons,'' \emph{IEEE Transactions on Intelligent Vehicles},
  vol.~4, no.~2, pp. 186--196, 2019.

\bibitem{7437386}
B.~Besselink, V.~Turri, S.~H. van~de Hoef, K.-Y. Liang, A.~Alam,
  J.~Mårtensson, and K.~H. Johansson, ``Cyber–physical control of road
  freight transport,'' \emph{Proc. of the IEEE}, vol. 104, no.~5, pp.
  1128--1141, 2016.

\bibitem{6669336}
F.~{Morbidi}, P.~{Colaneri}, and T.~{Stanger}, ``Decentralized optimal control
  of a car platoon with guaranteed string stability,'' \emph{Proc. of the
  European Control Conference}, pp. 3494--3499, 2013.

\bibitem{8049413}
M.~{Pirani}, E.~M. {Shahrivar}, B.~{Fidan}, and S.~{Sundaram}, ``Robustness of
  leader–follower networked dynamical systems,'' \emph{IEEE Transactions on
  Control of Network Systems}, vol.~5, no.~4, pp. 1752--1763, 2018.

\bibitem{7872392}
M.~{Pirani}, E.~{Hashemi}, J.~W. {Simpson-Porco}, B.~{Fidan}, and
  A.~{Khajepour}, ``Graph theoretic approach to the robustness of $k$ -nearest
  neighbor vehicle platoons,'' \emph{IEEE Transactions on Intelligent
  Transportation Systems}, vol.~18, no.~11, pp. 3218--3224, 2017.

\bibitem{9462542}
E.~Abolfazli, B.~Besselink, and T.~Charalambous, ``On time headway selection in
  platoons under the {MPF} topology in the presence of communication delays,''
  \emph{IEEE Transactions on Intelligent Transportation Systems}, vol.~23,
  no.~7, pp. 8881--8894, 2022.

\bibitem{751767}
S.~C. Warnick and A.~A. Rodriguez, ``Longitudinal control of a platoon of
  vehicles with multiple saturating nonlinearities,'' \emph{Proc. of the
  American Control Conference}, 1994.

\bibitem{1383798}
M.~R. Jovanovic, J.~M. Fowler, B.~Bamieh, and R.~D'Andrea, ``On avoiding
  saturation in the control of vehicular platoons,'' \emph{Proc. of the
  American Control Conference}, pp. 2257--2262, 2004.

\bibitem{7972982}
J.~C. {Zegers}, E.~{Semsar-Kazerooni}, J.~{Ploeg}, N.~{van de Wouw}, and
  H.~{Nijmeijer}, ``Consensus control for vehicular platooning with velocity
  constraints,'' \emph{IEEE Transactions on Control Systems Technology},
  vol.~26, no.~5, pp. 1592--1605, 2018.

\bibitem{TAO20197}
T.~Tao, V.~Jain, and S.~Baldi, ``An adaptive approach to longitudinal
  platooning with heterogeneous vehicle saturations,'' \emph{Proc. of the 15th
  IFAC Symposium on Large Scale Complex Systems}, 2019.

\bibitem{9007498}
S.~Baldi, D.~Liu, V.~Jain, and W.~Yu, ``Establishing platoons of bidirectional
  cooperative vehicles with engine limits and uncertain dynamics,'' \emph{IEEE
  Transactions on Intelligent Transportation Systems}, vol.~22, no.~5, pp.
  2679--2691, 2021.

\bibitem{9037093}
M.~{Sun}, A.~{Al-Hashimi}, M.~{Li}, and R.~{Gerdes}, ``Impacts of constrained
  sensing and communication based attacks on vehicular platoons,'' \emph{IEEE
  Transactions on Vehicular Technology}, vol.~69, no.~5, pp. 4773--4787, 2020.

\bibitem{doi:10.1002/acs.3032}
X.~Jin, W.~M. Haddad, Z.-P. Jiang, A.~Kanellopoulos, and K.~G. Vamvoudakis,
  ``An adaptive learning and control architecture for mitigating sensor and
  actuator attacks in connected autonomous vehicle platoons,''
  \emph{International Journal of Adaptive Control and Signal Processing},
  vol.~33, no.~12, pp. 1788--1802, 2019.

\bibitem{8360769}
M.~{Pirani}, E.~{Hashemi}, A.~{Khajepour}, B.~{Fidan}, B.~{Litkouhi},
  S.~{Chen}, and S.~{Sundaram}, ``Cooperative vehicle speed fault diagnosis and
  correction,'' \emph{IEEE Transactions on Intelligent Transportation Systems},
  vol.~20, no.~2, pp. 783--789, 2019.

\bibitem{6907977}
J.~{Ploeg}, E.~{Semsar-Kazerooni}, G.~{Lijster}, N.~{van de Wouw}, and
  H.~{Nijmeijer}, ``Graceful degradation of cooperative adaptive cruise
  control,'' \emph{IEEE Transactions on Intelligent Transportation Systems},
  vol.~16, no.~1, pp. 488--497, 2015.

\bibitem{9081994}
J.~Sawant, U.~Chaskar, and D.~Ginoya, ``Robust control of cooperative adaptive
  cruise control in the absence of information about preceding vehicle
  acceleration,'' \emph{IEEE Transactions on Intelligent Transportation
  Systems}, vol.~22, no.~9, pp. 5589--5598, 2021.

\bibitem{8276639}
F.~{Acciani}, P.~{Frasca}, G.~{Heijenk}, and A.~A. {Stoorvogel}, ``Achieving
  robust average consensus over lossy wireless networks,'' \emph{IEEE
  Transactions on Control of Network Systems}, vol.~6, no.~1, pp. 127--137,
  2019.

\bibitem{9497782}
F.~Acciani, P.~Frasca, G.~Heijenk, and A.~A. Stoorvogel, ``Stochastic string
  stability of vehicle platoons via cooperative adaptive cruise control with
  lossy communication,'' \emph{IEEE Transactions on Intelligent Transportation
  Systems}, vol.~23, no.~8, pp. 10\,912--10\,922, 2022.

\bibitem{8933478}
H.~{Xing}, J.~{Ploeg}, and H.~{Nijmeijer}, ``Compensation of communication
  delays in a cooperative {ACC} system,'' \emph{IEEE Transactions on Vehicular
  Technology}, vol.~69, no.~2, pp. 1177--1189, 2020.

\bibitem{8606268}
Z.~{Li}, B.~{Hu}, M.~{Li}, and G.~{Luo}, ``String stability analysis for
  vehicle platooning under unreliable communication links with event-triggered
  strategy,'' \emph{IEEE Transactions on Vehicular Technology}, vol.~68, no.~3,
  pp. 2152--2164, 2019.

\bibitem{9099998}
F.~{Ma}, J.~{Wang}, S.~{Zhu}, S.~Y. {Gelbal}, Y.~{Yang}, B.~{Aksun-Guvenc}, and
  L.~{Guvenc}, ``Distributed control of cooperative vehicular platoon with
  nonideal communication condition,'' \emph{IEEE Transactions on Vehicular
  Technology}, vol.~69, no.~8, pp. 8207--8220, 2020.

\bibitem{7954618}
Y.~A. {Harfouch}, S.~{Yuan}, and S.~{Baldi}, ``An adaptive switched control
  approach to heterogeneous platooning with intervehicle communication
  losses,'' \emph{IEEE Transactions on Control of Network Systems}, vol.~5,
  no.~3, pp. 1434--1444, 2018.

\bibitem{8550585}
F.~Acciani, P.~Frasca, A.~Stoorvogel, E.~Semsar-Kazerooni, and G.~Heijenk,
  ``Cooperative adaptive cruise control over unreliable networks: an
  observer-based approach to increase robustness to packet loss,'' \emph{Proc.
  of the European Control Conference}, pp. 1399--1404, 2018.

\bibitem{9091937}
M.~Althoff, S.~Maierhofer, and C.~Pek, ``Provably-correct and comfortable
  adaptive cruise control,'' \emph{IEEE Transactions on Intelligent Vehicles},
  vol.~6, no.~1, pp. 159--174, 2021.

\bibitem{7995821}
M.~Jalalmaab, B.~Fidan, S.~Jeon, and P.~Falcone, ``Guaranteeing persistent
  feasibility of model predictive motion planning for autonomous vehicles,''
  \emph{Proc. of the IEEE Intelligent Vehicles Symposium}, pp. 843--848, 2017.

\bibitem{7944649}
S.~Sadraddini, S.~Sivaranjani, V.~Gupta, and C.~Belta, ``Provably safe cruise
  control of vehicular platoons,'' \emph{IEEE Control Systems Letters}, vol.~1,
  no.~2, pp. 262--267, 2017.

\bibitem{8114339}
X.~Xu, J.~W. Grizzle, P.~Tabuada, and A.~D. Ames, ``Correctness guarantees for
  the composition of lane keeping and adaptive cruise control,'' \emph{IEEE
  Transactions on Automation Science and Engineering}, vol.~15, no.~3, pp.
  1216--1229, 2018.

\bibitem{7515222}
D.~{Bevly}, X.~{Cao}, M.~{Gordon}, G.~{Ozbilgin}, D.~{Kari}, B.~{Nelson},
  J.~{Woodruff}, M.~{Barth}, C.~{Murray}, A.~{Kurt}, K.~{Redmill}, and
  U.~{Ozguner}, ``Lane change and merge maneuvers for connected and automated
  vehicles: A survey,'' \emph{IEEE Transactions on Intelligent Vehicles},
  vol.~1, no.~1, pp. 105--120, 2016.

\bibitem{9072289}
V.~Giammarino, S.~Baldi, P.~Frasca, and M.~L.~D. Monache, ``Traffic flow on a
  ring with a single autonomous vehicle: An interconnected stability
  perspective,'' \emph{IEEE Transactions on Intelligent Transportation
  Systems}, vol.~22, no.~8, pp. 4998--5008, 2021.

\bibitem{9001172}
Y.~{Zhu}, D.~{Zhao}, and H.~{He}, ``Synthesis of cooperative adaptive cruise
  control with feedforward strategies,'' \emph{IEEE Transactions on Vehicular
  Technology}, vol.~69, no.~4, pp. 3615--3627, 2020.

\bibitem{9210204}
S.~Thormann, A.~Schirrer, and S.~Jakubek, ``Safe and efficient cooperative
  platooning,'' \emph{IEEE Transactions on Intelligent Transportation Systems},
  vol.~23, no.~2, pp. 1368--1380, 2022.

\bibitem{8704911}
S.~{Baldi} and P.~{Frasca}, ``Leaderless synchronization of heterogeneous
  oscillators by adaptively learning the group model,'' \emph{IEEE Transactions
  on Automatic Control}, vol.~65, no.~1, pp. 412--418, 2020.

\bibitem{8372462}
S.~{Baldi}, S.~{Yuan}, and P.~{Frasca}, ``Output synchronization of unknown
  heterogeneous agents via distributed model reference adaptation,'' \emph{IEEE
  Transactions on Control of Network Systems}, vol.~6, no.~2, pp. 515--525,
  2019.

\bibitem{7995802}
M.~Althoff, M.~Koschi, and S.~Manzinger, ``{CommonRoad}: Composable benchmarks
  for motion planning on roads,'' \emph{Proc. of the IEEE Intelligent Vehicles
  Symposium}, pp. 719--726, 2017.

\bibitem{frasca_book}
F.~Fagnani and P.~Frasca, \emph{Introduction to Averaging Dynamics over
  Networks}.\hskip 1em plus 0.5em minus 0.4em\relax Springer, 2018.

\bibitem{7525305}
J.~C. Zegers, E.~Semsar-Kazerooni, J.~Ploeg, N.~van~de Wouw, and H.~Nijmeijer,
  ``Consensus-based bi-directional {CACC} for vehicular platooning,''
  \emph{Proc. of the American Control Conference}, pp. 2578--2584, 2016.

\bibitem{goodwin2001}
G.~C. Goodwin, S.~F. Graebe, and M.~E. Salgado, \emph{Control Systems
  Design}.\hskip 1em plus 0.5em minus 0.4em\relax Prentice Hall, 2001.

\bibitem{CORTES2008726}
J.~Cortés, ``Distributed algorithms for reaching consensus on general
  functions,'' \emph{Automatica}, vol.~44, no.~3, pp. 726--737, 2008.

\bibitem{KALSI2010347}
K.~Kalsi, J.~Lian, S.~Hui, and S.~H. Żak, ``Sliding-mode observers for systems
  with unknown inputs: A high-gain approach,'' \emph{Automatica}, vol.~46,
  no.~2, pp. 347 -- 353, 2010.

\bibitem{xu2021opencda}
R.~Xu, Y.~Guo, X.~Han, X.~Xia, H.~Xiang, and J.~Ma, ``{OpenCDA}: an open
  cooperative driving automation framework integrated with co-simulation,''
  \emph{Proc. of the IEEE International Intelligent Transportation Systems
  Conference}, pp. 1155--1162.

\bibitem{10130071}
D.~Liu, S.~Baldi, and S.~Hirche, ``Collision avoidance in longitudinal
  platooning: Graceful degradation and adaptive designs,'' \emph{IEEE Control
  Systems Letters}, vol.~7, pp. 1694--1699, 2023.

\bibitem{7466806}
H.~{Hu}, R.~{Lu}, Z.~{Zhang}, and J.~{Shao}, ``{REPLACE}: A reliable
  trust-based platoon service recommendation scheme in {VANET},'' \emph{IEEE
  Transactions on Vehicular Technology}, vol.~66, no.~2, pp. 1786--1797, 2017.

\end{thebibliography}

\end{document}